\let\epsilon\varepsilon
\newcommand{\SH}{\mathcal{S}}
\newcommand{\OP}[1]{\textsc{#1}}
\newcommand{\INSERT}{\OP{insert}}
\newcommand{\MELD}{\OP{meld}}
\newcommand{\FINDMIN}{\OP{find-min}}
\newcommand{\DELETE}{\OP{delete}}
\newcommand{\EXTRACTMIN}{\OP{extract-min}}
\newcommand{\MAKEHEAP}{\OP{make-heap}}
\newcommand{\REDUCE}{\OP{reduce}}
\newcommand{\FIELD}[2]{{#1}(#2)}
\newcommand{\FIELDRM}[2]{\FIELD{\mathrm{#1}}{#2}}
\newcommand{\SUFFIXMIN}[1]{\FIELDRM{suffix\mbox{-}min}{#1}}
\newcommand{\HEAD}[1]{\FIELDRM{head}{#1}}
\newcommand{\RANK}[1]{\FIELDRM{rank}{#1}}
\newcommand{\KEY}[1]{\FIELDRM{key}{#1}}
\newcommand{\VALUE}[1]{\FIELDRM{value}{#1}}
\newcommand{\C}[1]{\FIELD{C}{#1}}
\newcommand{\W}[1]{\FIELD{W}{#1}}
\newcommand{\floor}[1]{\left\lfloor #1 \right\rfloor}
\newcommand{\ceil}[1]{\left\lceil #1 \right\rceil}
\newcommand{\latin}[1]{\textit{#1}}
\newtheorem{theorem}{Theorem}
\newtheorem{lemma}{Lemma}
\begin{document}

\title{Soft Sequence Heaps}

\author{Gerth Stølting Brodal \\
       Department of Computer Science, Aarhus University, Denmark \\
       \texttt{gerth@cs.au.dk}}

\maketitle

\begin{abstract}
    Chazelle~\cite{Chazelle00} introduced the \emph{soft heap} as a building block for efficient minimum spanning tree algorithms, and recently Kaplan \latin{et al.}~\cite{KaplanKZZ19} showed how soft heaps can be applied to achieve simpler algorithms for various selection problems. A soft heap trades-off accuracy for efficiency, by allowing $\epsilon N$ of the items in a heap to be \emph{corrupted} after a total of $N$~insertions, where a corrupted item is an item with artificially increased key and $0 < \epsilon \leq \frac{1}{2}$ is a fixed error parameter. Chazelle's soft heaps are based on binomial trees and support insertions  in amortized $O(\lg\frac{1}{\epsilon})$ time and extract-min operations in amortized $O(1)$ time. 
    
    In this paper we explore the design space of soft heaps. The main contribution of this paper is an alternative soft heap implementation based on merging sorted sequences, with time bounds matching those of Chazelle's soft heaps. We also discuss a variation of the soft heap by Kaplan \latin{et al.}~\cite{KaplanTZ13}, where we avoid performing insertions lazily. It is based on ternary trees instead of binary trees and matches the time bounds of Kaplan \latin{et al.}, i.e.\ amortized $O(1)$ insertions and amortized $O(\lg\frac{1}{\epsilon})$ extract-min. Both our data structures only introduce corruptions after extract-min operations which return the set of items corrupted by the operation.
\end{abstract}

\section{Introduction}

Chazelle in 1998~\cite{Chazelle98} introduced the \emph{soft heap} as a heap data structure surpassing the comparison lower bounds of heaps by allowing the controlled \emph{corruptions} of keys, i.e.\ artificially increasing the values of the keys of a fraction of the inserted items.
The power of soft heaps was demonstrated by Chazelle in~\cite{Chazelle00a}, who showed how soft heaps could be the key ingredient to compute a minimum spanning tree in time $O(m\cdot\alpha(m,n))$, where $\alpha$ is the inverse of Ackermann's function, and $n$ and $m$ are the number of vertices and edges in the graph, respectively.
Pettie and Ramachandran~\cite{PettieR00,PettieR02} subsequently achieved an optimal comparison based minimum spanning tree algorithm, also using soft heaps, with running time matching the (still unknown) decision-tree complexity of the problem.
20~years later the soft heap paper by Chazelle~\cite{Chazelle98} was awarded the ESA Test-of-Time Award 2018 for its significance on the development of algorithms for the fundamental minimum spanning tree problem.
 
Even though soft heaps were crucial for advancing the knowledge on the minimum spanning tree problem, their applications have remained surprisingly sparse in the literature otherwise. One could speculate this is due to their unconventional interface. Kaplan~\latin{et~al}.~\cite{KaplanKZZ19} recently presented new applications of soft heaps, and in particular strengthened the requirements for the interface to the soft heap operations, to 1) report when an item is considered corrupted internal to a soft heap, 2) to tag returned items if they are corrupted, and 3) restrict corruptions to only be allowed \emph{after} the extraction of the current minimum from a soft heap. This modified interface allowed in particular a very simple and elegant solution to the binary heap selection problem, a significant simplification compared to the previous significantly more complex solution by Frederickson~\cite{Frederickson93}.

In this paper we explore the design-space of soft heaps. The goal of this paper is to present an alternative and simple implementation of soft heaps supporting the interface of Kaplan \latin{et al}.~\cite{KaplanKZZ19}. In~\cite{KaplanKZZ19} it was described how the soft heap in~\cite{KaplanTZ13} could support this interface with minor changes. Our solution is based on merging sorted sequences as opposed to all previous solutions which are all based on heap ordered trees. Similar to all previous solutions, our solution also makes essential use of Chazelle's car-pooling idea.

\subsection{Soft heaps}
\label{sec:soft-heap-interface}

Like a normal priority queue, soft heaps store a set of (key, value) pairs called \emph{items}, where the keys are from an ordered universe.
As opposed to a normal priority, soft heaps are allowed to \emph{corrupt} the keys of the items by artificially increasing the keys. 
A soft heap trades-off accuracy for efficiency, by allowing up to $\epsilon N$ of the items in a heap to be corrupted after a total of $N$~insertions, where $0 < \epsilon \leq \frac{1}{2}$ is a fixed error parameter. Note that the number of allowed corruptions in the heap is $\epsilon N$, which can be larger than the current number of items~$n$ in the soft heap. In particular it is possible that all keys in a soft heap are corrupted when $\epsilon N \geq n$.

We call the original key of an item the \emph{real} key and the increased key the \emph{current} key.
A corrupted key can be increased multiple times by the soft heap, but never lowered from its current key.
When performing a sequence of insertions and extract-min operations on a soft heap, the soft heap always returns items correctly with respect to their current keys. 
The effect of corruptions on the extracted sequence is that an item that gets corrupted internally in the soft heap raises a flag that the extraction of the item may appear later in the sequence of extractions because of the artificially high key, i.e.\ the user might miss out on getting this item extracted in the correct order. 
When a corrupted item eventually is extracted from a soft heap its current key is an upper bound on its real key, and the extraction allows space for another item to get corrupted in the soft heap.

In this paper we adopt the soft heap interface described Kaplan \latin{et al}.~\cite{KaplanKZZ19}, that explicitly notifies the user about the corruptions introduced. For an application of this interface we refer the reader to the elegant heap selection algorithm in~\cite[Section 3]{KaplanKZZ19}.

\begin{itemize}
\item \MAKEHEAP$()$ creates an empty soft heap~$\SH$ and returns a reference to~$\SH$. 
\item \INSERT$(\SH, e)$ inserts item~$e=(k,v)$ with real key~$k$ and value~$v$ into soft heap~$\SH$.
\item \MELD$(\SH_1, \SH_2)$ melds the soft heaps~$\SH_1$ and $\SH_2$, and returns a reference to the resulting soft heap.
\item \FINDMIN$(\SH)$ returns a pair $(e, k)$, where $e$ is an item with minimum current key~$k$ in the soft heap $\SH$.
\item \EXTRACTMIN$(\SH)$ removes an item~$e$ from the soft heap~$\SH$, where $e$ has minimum current key $k$ before the operation, and returns the triple $(e, k, C)$, where $C$ is the list of items in the heap that were not corrupted before $e$ was removed from~$\SH$, but became corrupted as a result of removing $e$ from~$\SH$.
\item \DELETE$(\SH, e)$ removes item~$e$ from the soft heap~$\SH$. Returns a list $C$ of the items where the key became corrupted by removing~$e$. Requires a reference is given to the location of $e$ in the soft-heap.
\end{itemize}

Chazelle~\cite{Chazelle98,Chazelle00} presented the first implementation of soft heaps, by adopting the idea of \emph{car-pooling} to binomial trees, achieving {\INSERT} in amortized time $O(\lg\frac{1}{\epsilon})$ and all other operations in amortized constant time.%
\footnote{$\lg n$ denotes the binary logarithm of $n$}
Kaplan and Zwick in \cite{KaplanZ09} gave a simplified construction based on binary trees with matching amortized performance. 
Kaplan \latin{et al.}~\cite{KaplanTZ13} presented a solution where all operations are amortized constant time except for {\EXTRACTMIN} and {\DELETE} which take amortized $O(\lg\frac{1}{\epsilon})$ time, i.e.\ postponing the dependence on $\epsilon$ to deletions.
All these solutions, like ours, use car-pooling to achieve their efficiency. Essentially car-pooling treats a pool of items as a single item, and assigns all the items in the pool current key equal to the maximum real key in the pool. By appropriately maintaining a collection of pools the bound on the total number of corruptions can be guaranteed within the stated time bounds. 

For other models of computation, Thorup \latin{et al.}~\cite{ThorupZZ19} presented non-comparison based soft heaps for the RAM model achieving amortized $O(\lg\lg\frac{1}{\epsilon})$ time per operation, or amortized expected $O(\sqrt{\lg\lg \rule{0ex}{2.1ex}\smash{\frac{1}{\epsilon}}})$ using randomization. 
Bhushan and Gopalan~\cite{BhushanG12} considered soft heaps in external memory,
achieving amortized $O(\frac{1}{B}\log_{M/B} \frac{1}{\epsilon})$ I/Os per insertion, and other operations in non-posititive amortized I/Os, where $M$ is the main memory size and $B$ the disk block size, provided $N=O(Bm^{M/2(B+\sqrt{m})})$ where $m=M/B$.

\subsection{Applications of soft heaps}

The groundbreaking applications of soft heaps are in the mentioned minimum spanning tree algorithms by Chazelle~\cite{Chazelle00a} and Pettie and Ramachandran~\cite{PettieR02}. Further applications were given by Chazelle~\cite{Chazelle00} who showed how soft heaps can lead to alternative solutions for computing exact and approximate medians in linear time, yielding an alternative solution to the classical selection algorithm by Blum \latin{et al.}~\cite{BlumFPRT73}, 
algorithms for finding dynamic  percentiles,
and approximate sorting algorithms with running time $O(n\lg\frac{1}{\epsilon})$ generating sequences with at most $\epsilon n^2$ inversions or where each element is assigned a rank within $\epsilon n$ of its true rank. 

Kaplan \latin{et al.}~\cite{KaplanKZZ19} give further applications of soft heaps. Their main contribution is a very simple algorithm to select the $k$-th smallest item in a binary heap in time~$O(k)$, significantly simplifying the previous approach by Frederickson~\cite{Frederickson93} that was achieved over a sequence of improvements starting with running time $O(k\lg k)$, and then adding ideas to first improve this to $O(k\lg\lg k)$, then to $O(k3^{\lg^* k})$, $O(k2^{\lg^* k})$, and finally $O(k)$. Kaplan~\latin{et~al.}\ then apply the heap selection algorithm to develop various new selection algorithms:
an algorithm for selecting the $k$-th smallest item from a row-sorted matrix with $m$ rows in time $O(m\lg\frac{k}{m})$,
matching a previous bound by Frederickson and Johnson~\cite{FredericksonJ82}, 
and a new algorithm with output sensitive running time of $O(m+\sum_{i=1}^m\lg(k_i+1))$, where $k_i$
is the number of items in the $i$-th row smaller than the $k$-th smallest item, and finally an algorithm to find the $k$-th smallest element from $X+Y$, where $X$ and $Y$ are two unordered sets of $m$ and $n$ items respectively, where $m\geq n$, with running time $O(m\lg\frac{k}{m})$ matching a previous bound of Frederickson and Johnson~\cite{FredericksonJ82}. 
Chakrabarti \latin{et al.}~\cite{ChakrabartiPG11} used soft heaps in an experimental study on graph conductance search.

\subsection{Results}

The main contribution of this paper is a new implementation of soft heaps, \emph{soft sequence heaps}, designed to satisfy the interface of Kaplan \latin{et al.}~\cite{KaplanKZZ19}.

\begin{theorem}
\label{thm:soft-sequence-heap}
    A soft sequence heap supports {\INSERT} in amortized $O(\lg\frac{1}{\epsilon})$ time and all other soft heap operations in amortized constant time, for a fixed error parameter $0<\epsilon<1$.
    After a total of $N$ insertions the soft heap contains at most $\epsilon N$ items with corrupted keys.
\end{theorem}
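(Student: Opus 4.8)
The argument splits into three essentially independent parts: (i) \emph{correctness} of the soft-heap semantics, (ii) the \emph{corruption bound} of $\epsilon N$, and (iii) the \emph{amortized running times}. Throughout I would fix a rank threshold $r=\Theta(\lg\frac1\epsilon)$ above which car-pooling is applied (its exact value to be pinned down in part (ii)), and introduce an explicit potential function $\Phi$ for part (iii). For (i) the key observation is that every operation on keys is monotone: car-pooling only ever \emph{raises} the current key of an item to the maximum \emph{real} key of its pool and never lowers a key, so each maintained sequence stays sorted by current key. Hence the global minimum current key is the minimum over the heads of the sequences, which is exactly what the $\SUFFIXMIN{\cdot}$ pointers along the list of sequences track; this yields $\FINDMIN$ in $O(1)$ and identifies the item removed by $\EXTRACTMIN$. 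It then remains to verify that the list $C$ returned by $\EXTRACTMIN$ is precisely the set of previously-uncorrupted items that are pooled during the single sequence combination triggered by removing the head, and that corruptions arise only as such a side effect -- this is where the interface of Kaplan \latin{et al.}~\cite{KaplanKZZ19} is met, and $\DELETE$ is handled by the same local surgery.

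For (ii), the heart of the matter, the plan is a charging argument that bounds the number of corruptions \emph{ever created}, which suffices because a corruption is never reversed. A corruption is produced only when two sequences of equal $\RANK{\cdot}$ at least $r$ are combined, a rank-$j$ combine creates only $O(2^{\Theta(j-r)})$ new corruptions, and every rank-$j$ sequence is the product of $\Omega(2^{j})$ insertions having flowed through it. I would therefore charge the corruptions created at a rank-$j$ combine to $\Omega(2^{j})$ distinct underlying insertions (via tokens on insertions, or equivalently via the monotone per-sequence weight $\W{\cdot}$), and, after using a binary-counter-style bound on how often rank-$j$ combines occur, sum the resulting geometric series over $j\ge r$ to get a total of $O(N/2^{\Theta(r)})$ corruptions; choosing $r=\Theta(\lg\frac1\epsilon)$ makes this at most $\epsilon N$. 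The delicate point, and the step I expect to be the main obstacle, is that sequences are repeatedly split by $\EXTRACTMIN$ and re-combined while items migrate among them, so the invariant driving this count must be stated in a split/merge-robust form and the accounting must guarantee that no corruption and no insertion's token is counted twice over the lifetime of the structure.

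For (iii), I would take $\Phi$ to be a sum over the current sequences of a term growing with $\RANK{\cdot}$, plus credits deposited on the ``cheap'' low-rank part of the structure (the part below the car-pooling threshold, which behaves like an ordinary mergeable heap). An $\INSERT$ creates a rank-$0$ singleton and is charged $O(\lg\frac1\epsilon)$: this pre-pays both the cascade of equal-rank combinations it may provoke up to rank $r$ and the credits it must place. An $\EXTRACTMIN$ pops a head, performs one combination, and rebuilds a constant-length prefix of the $\SUFFIXMIN{\cdot}$ list -- $O(1)$ real work -- while each combine at rank $\ge r$ car-pools $\Omega(1)$ items, shrinking the structure and releasing enough potential to leave $\EXTRACTMIN$, $\MELD$ and $\DELETE$ at amortized $O(1)$. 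The second point requiring care is to design $\Phi$ so that it simultaneously keeps $\INSERT$ at $O(\lg\frac1\epsilon)$ (not more), absorbs the low-rank carry cascades, and does not draw on the budget used for the corruption count in part (ii); reconciling parts (ii) and (iii) under one consistent accounting is where I would spend the most effort after the robustness issue above.
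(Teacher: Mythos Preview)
Your plan has two genuine gaps, one in part~(ii) and one in part~(iii), both stemming from a mismatch between your mental model and the actual soft sequence heap construction.

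\textbf{The corruption bound.} You propose to bound the total number of corruptions \emph{ever created} by charging each rank-$j$ combine's $O(2^{\Theta(j-r)})$ new corruptions against the $\Omega(2^j)$ insertions feeding into it. This fails for the construction at hand: the pruning done by {\REDUCE} is so aggressive that only $O(\sqrt{N/\epsilon}) = o(N)$ items remain in the sequences~$L_i$ (Lemma~\ref{lem:total-size-L}); essentially \emph{all} items are eventually moved into corruption-sets, so any bound on ``items ever placed in a corruption-set'' is $\Theta(N)$, not $\epsilon N$. The paper's remedy is the notion you omit entirely: \emph{witness-sets}. An item~$e$ pruned into $\C{e'}$ is simultaneously placed in $\W{e''}$ for its predecessor $e''$, and is \emph{not} counted as corrupted while it still has a witness. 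The corruption bound (Lemma~\ref{lem:total-corruptions}) is then obtained not by counting corruption events but by bounding, for each rank-$r$ sequence and each key value~$x$, the number $d_r$ of items whose interval $I(e)=\,]\KEY{e''},\KEY{e'}]$ covers~$x$; the corrupted items are exactly those with $I(e)$ extending to $-\infty$. This interval-covering argument (Lemma~\ref{lem:dr}) is the heart of the proof and has no analogue in your plan.

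\textbf{The time bounds.} Your description of {\EXTRACTMIN} as ``pops a head, performs one combination, rebuilds a constant-length prefix of the suffix-min list --- $O(1)$ real work'' is incorrect on both counts. First, {\EXTRACTMIN} never splits or recombines sequences; sequences are only merged during {\INSERT} and {\MELD}. Second, removing the head of a rank-$r$ sequence requires updating up to $r+1$ suffix-min references, which is $\Theta(r)$ work, not $O(1)$. The paper handles this by a global accounting (Lemma~\ref{lem:time-bounds}): at most $s_r$ head-removals can occur per rank-$r$ sequence ever created, and summing $\sum_r \lfloor N/2^r\rfloor \cdot s_r \cdot r$ using the bound $s_r = O(2^{(r+r_0)/2})$ gives $O(N r_0)$ total. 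Your potential-function sketch does not account for this, and a potential releasing enough to cover $\Theta(r)$ suffix-min updates per {\EXTRACTMIN} would need to be charged somewhere --- ultimately to {\INSERT}, via exactly this summation.
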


A (non-soft) sequence heap is a simple priority queue storing its items in a logarithmic number of sorted sequences (see Section~\ref{sec:sequence-heaps}).
In the literature several priority queues exist based on this idea. 
Examples are
external memory priority queues~\cite{BrengelCFM00,BrodalK98}, 
cache efficient priority queues~\cite{Sanders00}, 
and efficient RAM priority queues~\cite{Thorup00}.
Sanders~\cite{Sanders00} coined such an approach a \emph{sequence heap}.
Earlier, Fischer and Paterson~\cite{FischerP94} developed a priority queue aimed at sequential storage also consisting of a sequence of sorted lists.
Our contribution is to adapt the car-pooling idea of Chazelle to sequence heaps.

Table~\ref{tab:previous_work} summarizes our contributions and contains a comparison of the essential properties of our contributions to previous work.

\begin{table}[t]
    \newcommand{\RAISE}[1]{\raisebox{-1.5ex}[0ex][0ex]{#1}}
    \centering
    \caption{Previous and new results for Soft Heaps}
    \begin{tabular}{lcccl}
        && Insert & ExtractMin & \\
        \hline
        Chazelle 2000 \cite{Chazelle00} & & & & Binomial trees\\
        Kaplan, Zwick 2009 \cite{KaplanZ09} & \rlap{\smash{$\left.\rule{0em}{5ex}\right\}$}}
        & $O(\lg\frac{1}{\epsilon})$ & $O(1)$ & Binary trees \\
        \textbf{New} (Section~\ref{sec:soft-sequence-heaps}) & & & & Sorted sequences \\
        \hline
        Kaplan, Tarjan, Zwick 2013 \cite{KaplanTZ13} &
        \rlap{\RAISE{$\left.\rule{0em}{2.5ex}\right\}$}}
        & \RAISE{$O(1)$} & \RAISE{$O(\lg\frac{1}{\epsilon})$} & Binary trees \\ 
        \textbf{New} (Section~\ref{sec:ternary}) && & & Ternary trees \\
        \hline
    \end{tabular}
    \label{tab:previous_work}
\end{table}

\subsection{Structure of paper}

In Section~\ref{sec:sequence-heaps} we recall the basic idea of (non-soft) sequence heaps. 
In Section~\ref{sec:soft-sequence-heaps} we show how to convert sequence heaps into soft-heaps using car-pooling. 
In Section~\ref{sec:ternary} we discuss a variation of the soft-heap presented by Kaplan \latin{et al.}~\cite{KaplanTZ13} and show that we can satisfy the interface of Kaplan \latin{et al.}~\cite{KaplanKZZ19} \emph{without} buffering insertions.

\section{Sequence heaps}
\label{sec:sequence-heaps}

A (non-soft) sequence heap stores items in a logarithmic number of sorted sequences $L_1,L_2,\ldots,L_{\ell}$, where each sequence $L_i$ is assigned a non-negative integer \emph{rank} $\RANK{L_i}$. 
The sequences are maintained in a list $\mathcal{L}$ in increasing rank order.
\INSERT$(e)$ creates at the front of $\mathcal{L}$ a new rank zero sequence containing~$e$, 
and repeatedly merges the first two sequences of $\mathcal{L}$ if they have equal rank~$r$ to a new sequence of rank~$r+1$ until all sequences have distinct ranks. 
{\EXTRACTMIN} finds the sequence where the first item has minimum key, and removes and returns this item.
Figure~\ref{fig:sequence-heap} shows the result of applying {\INSERT} and {\EXTRACTMIN} to a sequence heap.

That {\INSERT} and {\EXTRACTMIN} take amortized $O(\lg N)$ time follows from some simple observations: 
A sequence of rank $r$ contains $2^r$ items (if also counting deleted items),
i.e.\ the maximum rank of a sequence after $N$~insertions is at most $\floor{\lg N}$; 
an inserted item can at most participate in a number of merges bounded by the maximum rank;
and since insertions ensure that the sequences have distinct rank the time for extracting the minimum is also bounded by the maximum rank.

\begin{figure}[t]
    \centering
    \input{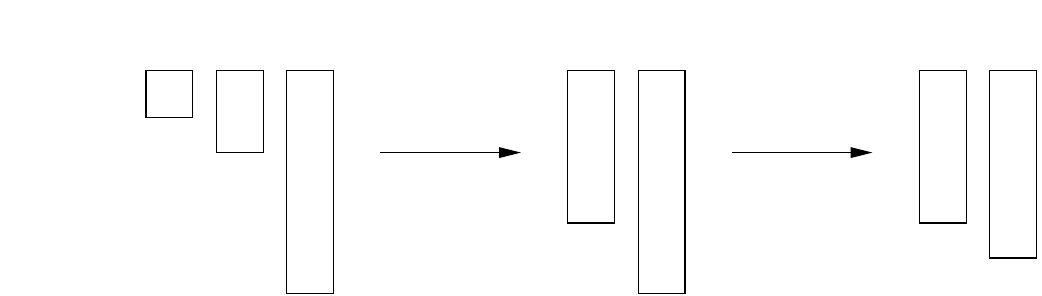_t}
    \caption{A (non-soft) sequence heap. \INSERT$(4)$ first creates a new rank zero sequence~$(4)$, that will we merged with the rank zero sequence~$(2)$, creating the rank one sequence~$(2,4)$, that will be merged with the rank one sequence~$(3,5)$, finally creating the rank two sequence~$(2,3,4,5)$. {\EXTRACTMIN} removes the smallest item, here~1, from the head of its sequence.}
    \label{fig:sequence-heap}
\end{figure}

\begin{figure}[t]
    \centering
    \input{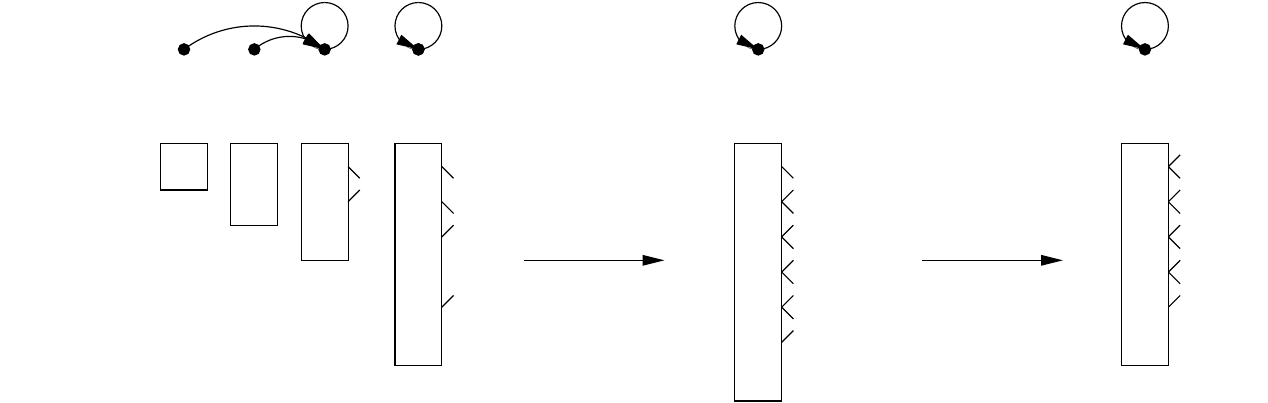_t}
    \caption{A soft sequence heap with $r_0=0$. To the right of item~$e$, $\C{e}$ is shown top-right and $\W{e}$ bottom-right (if non-empty). To perform
    \INSERT$(10)$, a new sequence $(10)$ of rank zero is first merged with rank zero sequence~$(15)$, and then with rank one sequence $(12,14)$ to produce rank two sequence~$(10,12,14,15)$. The pruning of this sequence moves 12 to $\W{10}$ and $\C{14}$. Rank two sequences $(10,14,15)$ and $(3,20,24)$ are then merged to yield a rank three sequence $(3,10,14,15,20,24)$, that finally is merged with $(4,7,18,19,21,23)$, where items $4,10,15,19,21$ are pruned from the resulting rank four sequence. {\EXTRACTMIN} returns the minimum item~3 in the single sequence (since $\C{3}=\emptyset$), and reports $6,16,4$ as corrupted (their current keys are 20, 23, and~7, respectively).}
    \label{fig:soft-sequence-heap}
    \bigskip
    \input{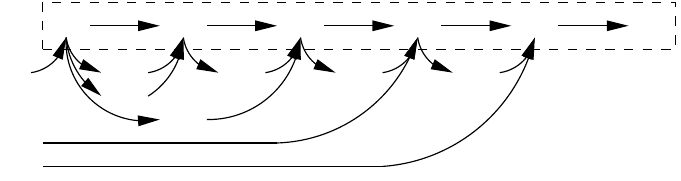_t}
    \caption{The partial order represented by the rightmost sequence in Figure~\ref{fig:soft-sequence-heap}. The items with corrupted keys are items 4, 6 and 16.}
    \label{fig:partial-order}
\end{figure}

\section{Soft sequence heaps}
\label{sec:soft-sequence-heaps}

In this section we describe  \emph{soft sequence heaps} 
derived by adapting Chazelle's car-pooling idea to \emph{sequence heaps}. 
Below, we first describe the basic ideas used to convert sequence heaps into soft-heaps, 
next we give the details of the representation and the implementation of the operations, 
and finally we analyse our construction.

\subsection{Corruption-sets and witness-sets} 

To make sequence heaps achieve the performance of soft heaps we essentially adopt two ideas. In the following $0 < \epsilon < 1$ and $r_0 = \ceil{\lg \frac{1}{\epsilon}}$ is a rank threshold.

\emph{Corruption-sets}: With each item~$e$ in a sorted sequence we store a \emph{corruption-set} $\C{e}$ containing items where the key eventually should be raised to $\KEY{e}$.
Whenever the merging of two sequences of equal rank~$r-1$ results in a new sequence of rank~$r > r_0$, where $r-r_0$ is \emph{even},
we \emph{prune every second item from the sequence} (the first and last items in a sequence are not be pruned). 
For an item~$e$ to be pruned and with successor~$e'$ in the sequence, 
we add $e$ to~$\C{e'}$ together with all items from~$\C{e}$. 
This implements the car-pooling idea of Chazelle.

\emph{Witnesses}:
The above unfortunately only leaves $o(N)$ items not pruned from the sequences (see Lemma~\ref{lem:total-size-L} below).
To avoid reporting too many corruptions we postpone reporting items as corrupted until they can influence the order of the items returned by {\EXTRACTMIN}. 
For this purpose we assign a \emph{witness} to each item when it is initially added to a corruption-set.
An item with a witness is not considered corrupted.
When we prune an item~$e$ from a sequence its predecessor $e''$ in the sequence becomes the witness for~$e$.
A corrupted item $e$ in a soft sequence heap is an item $e$ in a corrupted set $\C{e'}$ without a witness. The current key of $e$ is then the real key of $e'$.
For an item $e''$ we let the \emph{witness-set} $\W{e''}$ be all the items $e''$ is a witness for.
When we prune $e$, we add $e$ to the corruption-set $\C{e}$ of its successor $e'$ and to the witness-set $\W{e''}$ of its predecessor.
To $\W{e''}$ we also add all items from $\W{e}$, i.e.\ these items get their witness~$e$ replaced by~$e''$, where $\KEY{e''} \leq \KEY{e}$.
In general, the witness of an item~$e$ is an item $e''$ still in the sequence with $\KEY{e''} \leq \KEY{e}$. 
When an item~$e''$ is deleted from the soft sequence heap we report all items in $\W{e''}$ as corrupted.

\subsection{The representation details}

We let $e$ denote an item in the heap, 
$\KEY{e}$ the real key of~$e$, 
and $\VALUE{e}$ the value of~$e$. 
A soft sequence heap $\SH$ is represented by 
a list $\mathcal{L}$ of non-empty sequences $L_1, L_2, \ldots, L_\ell$ of items.
Each sequence~$L_i$ has a rank, $\RANK{L_i}$,
the sequences appear in strictly increasing rank order, i.e.\ $\RANK{L_i} < \RANK{L_{i+1}}$ for $1\leq i <\ell$.
The items in $L_i$ are sorted in increasing order by key.
With each item~$e$ in $L_i$ we store a corruption-set $\C{e}$ and witness-set $\W{e}$, possibly empty, 
of items pruned from the sorted sequences but still in the heap.
Both sets are represented by cyclic linked lists, with entry points to the last items in the lists.

Each item~$e$ is stored in exactly one $L_i$ sequence or one corruption-set $\C{e'}$. If $e\in \C{e'}$ then $\KEY{e} \leq \KEY{e'}$, and $e$ is possibly also stored in one witness-set $\W{e''}$, where $\KEY{e''} \leq \KEY{e}$. If $e \in \W{e''}$ and $e \in \C{e'}$, then $e'$ and $e''$ are in the same $L_i$ sequence and $e''$ occurs before $e'$ in the sequence.
The corrupted items in a sequence are precisely the items contained in a corruption-set but not in a witness-set, and if a corrupted item $e\in\C{e'}$ then the current key of $e$ is $\KEY{e'}$.
A sequence together with its corruption-sets and witness-sets can be viewed as maintaining a partial order, where $e'' \leq e$ if $e\in\W{e''}$, and $e\leq e'$ if $e\in\C{e'}$.
See Figure~\ref{fig:partial-order} for an example.

To efficiently maintain a reference to the current minimum item, for each sequence~$L_i$, we maintain a \emph{suffix-min} reference. The same idea was used by Chazelle~\cite{Chazelle00} and implicitly by Kaplan \latin{et al.}~\cite{KaplanTZ13}.
For sequence~$L_i$, $\SUFFIXMIN{L_i}$ is a reference to sequence~$L_j$, where $i\leq j\leq \ell$, such that the first item in $L_j$ has smallest key among the items in $L_i \cup L_{i+1} \cup \cdots \cup L_\ell$.
By definition $\SUFFIXMIN{L_\ell}=L_\ell$ and the first item in the sequence $\SUFFIXMIN{L_1}$ has smallest key among all items in all sequences.
The reference $\SUFFIXMIN{L_i}$ can be updated as follows, assuming $\SUFFIXMIN{L_{i+1}}$ is known:
If $i=\ell$ or $\KEY{\HEAD{L_i}} \leq \KEY{\HEAD{\SUFFIXMIN{L_{i+1}}}}$
then $\SUFFIXMIN{L_i}=L_i$,
otherwise $\SUFFIXMIN{L_i}=\SUFFIXMIN{L_{i+1}}$.
Here $\HEAD{L_i}$ refers to the first item in the sequence.

\subsection{Soft heap operations}

We now describe how to implement the operations on a soft sequence heap. Given an error parameter $\epsilon$, the rank threshold $r_0=\ceil{\lg\frac{1}{\epsilon}}$ allows us to trade accuracy for improved running time. 
Sequences with rank $\leq r_0$ behave as in a (non-soft) sequence heap and all corruption-sets and witness-sets are empty in these sequences.

The helper method $\REDUCE(L)$ takes a sorted sequence of items $L=e_1,e_2,\ldots,e_m$, 
and prunes~$e_{2i}$ from $L$, for all $1\leq i< m/2$, 
i.e.\ the first and last items are not pruned and
the reduced sequence has length $\ceil{\frac{m + 1}{2}}$. 
Before pruning an item $e_{2i}$ from $L$,
the items $\{ e_{2i} \} \cup \C{e_{2i}}$ are appended to $\C{e_{2i+1}}$, 
and $\{ e_{2i} \} \cup \W{e_{2i}}$ are appended to $\W{e_{2i-1}}$, and $\C{e_{2i}}$ and $\W{e_{2i}}$ cease to exist.
Since $\C{e_{2i}}$ and $\W{e_{2i}}$ sets are cyclic linked lists, this can be done in constant time for each $e_{2i}$.

To support {\DELETE} operations we apply \emph{lazy} deletions, where an item is only marked as being deleted, and remains in the soft sequence heap until it becomes the minimum of the soft sequence heap, where it can be deleted using {\EXTRACTMIN}.
We maintain the invariant that the current minimum item of the soft sequence heap is never an item that has been lazily deleted.

Figure~\ref{fig:soft-sequence-heap} illustrates {\INSERT} and {\EXTRACTMIN} on a soft sequence heap.

\begin{itemize}
    \item \MAKEHEAP$()$ Creates an empty soft heap $\SH$ with $\mathcal{L}=\emptyset$.

    \item \FINDMIN$(\SH)$ Let $L_i=\SUFFIXMIN{L_1}$, $e=\HEAD{L_i}$ and $k=\KEY{e}$.
        Return $(e, k)$ if $\C{e}=\emptyset$, otherwise return $(e', k)$, where $e'=\HEAD{\C{e}}$.

    \item \INSERT$(\SH, e)$ First we create a new sequence of rank zero only containing $e$. This sequence is added to the front of $\mathcal{L}$. While the two first sequences of $\mathcal{L}$ have equal rank~$r$, we merge the two sequences into a sequence of rank $r+1$, that replaces the two first sequences in~$\mathcal{L}$. 
    Whenever creating a sequence $L$ of rank $r>r_0$ where $r-r_0$ is even, we apply $\REDUCE(L)$. Finally we update $\SUFFIXMIN{L_1}$ for the new first sequence $L_1$.

    \item \MELD$(\SH_1, \SH_2)$ Let $\mathcal{L}_1$ and $\mathcal{L}_2$ be the two lists of sequences respectively, and 
    let $r_1$ and  $r_2$ be the maximal ranks of a sequence in $\mathcal{L}_1$  and $\mathcal{L}_2$, respectively. 
    Merge $\mathcal{L}_1$ and $\mathcal{L}_2$ by non-decreasing rank until one of the lists is empty in time $O(\min(r_1, r_2))$. 
    Let  $\mathcal{L}$ be the resulting list. 
    While two sequences in $\mathcal{L}$ have equal rank, merge the two last sequences of equal rank~$r$ and apply $\REDUCE$ if the resulting sequence has rank $r + 1 > r_0$ and $(r + 1) - r_0$ is even.
    Update $\SUFFIXMIN{L_i}$ for the sequences in the new prefix of $\mathcal{L}$ and return a reference to~$\mathcal{L}$.
    
    \item \DELETE$(\SH, e)$ Let $(e',k')=\FINDMIN(\SH)$. If $e\neq e'$, mark item~$e$ to be lazily deleted and leave it in the soft heap. Otherwise, call $\EXTRACTMIN(\SH)$ and return the items becoming corrupted.

    \item \EXTRACTMIN$(\SH)$
    Let $L_i=\SUFFIXMIN{L_1}$, $e=\HEAD{L_i}$, and $k=\KEY{e}$.
    If $\C{e} \neq \emptyset$ we remove the corrupted item $e'=\HEAD{\C{e}}$ from $\C{e}$ and return $(e', k, \emptyset)$, and are done. Otherwise, we corrupt all items in $W(e)$ and prepare to return $(e, k, C)$, where $C=W(e)$. First we remove all items from $C$ that are lazily deleted.
    If $L_i$ has become empty, we remove $L_i$ from~$\mathcal{L}$.
    Otherwise, we update $\SUFFIXMIN{L_{i}}$.
    Finally, we update $\SUFFIXMIN{L_{i-1}},\ldots,\SUFFIXMIN{L_1}$.
    If the new minimum item to be returned by {\FINDMIN} is marked as lazily deleted, we repeatedly remove this, until the new minimum is not lazily deleted or $\mathcal{L}$ is empty, while accumulating all generated corruptions in~$C$.
    Eventually, we return $(e, k, C)$.
\end{itemize}

\subsection{Analysis}

In the following, we assume for simplicity that all items have distinct keys.
For each item~$e$ in a corruption-set $\C{e'}$ we define an interval~$I(e)$, with
$\KEY{e} \in I(e)$.
If $e$ also has a witness~$e''$, 
i.e.\ $e\in \W{e''}$, we let $I(e) = ]\KEY{e''}, \KEY{e'}]$.
If $e$ has no witness, we let $I(e) = ]-\infty, \KEY{e'}]$.
The interval $I(e)$ captures the partial order maintained by the data structure for item~$e$ in $L_i$.
See Figure~\ref{fig:partial-order}.

That the algorithm maintains a partial order consistent with the total order follows from how the corruption- and witness-sets are updated during {\REDUCE}, and how {\EXTRACTMIN} and {\INSERT} proceed. While we merge sequences $L_i$, we only change the partial order with respect to the items in $L_i$, and these are merged according to the total order. When pruning an item $e_{2i}$ from a sequence, the item $e_{2i}$ keeps $e_{2i-1}$ and $e_{2i+1}$ as predecessor and successor in the partial order. All items in $\W{e_{2i}}$ get their predecessor changed from $e_{2i}$ to $e_{2i-1}$, but since $e_{2i-1}\leq e_{2i}$, the partial order remains valid. Similar all items  $\C{e_{2i}}$ get their successor in the partial order changed from $e_{2i}$ to $e_{2i+1}$, but again the partial order remains valid since  $e_{2i}\leq e_{2i+1}$. When {\EXTRACTMIN} removes the first item~$e$ of an $L_i$, all items in $\W{e}$ loose their lower bound relation to $e$ (and become corrupted), and if {\EXTRACTMIN} returns an item $e' \in \C{e}$, where $e=\min(L_i)$, then the item $e'$ only had a relationship to $e$, and the partial order remains consistent. 
Note that the above also implies that for any item $e$ the interval $I(e)$ can only monotonically increase throughout the lifetime of $e$ in the soft sequence heap. The correctness of the operations, in particular {\FINDMIN} and {\EXTRACTMIN}, follows from the fact that it always returns an item with current key equal to the minimum key of all non-corrupted items in $L_1,\ldots,L_\ell$.

The remaining of this section is devoted to show that the total number of corruptions in a soft sequence heap is bounded by $\epsilon N$ and that the time bounds are as stated in Theorem~\ref{thm:soft-sequence-heap}.

\begin{lemma}
\label{lem:max-rank}
    A sequence with rank~$r$ contains at most $2^r$ items, and
    after $N$ insertions all sequences have rank at most $\floor{\lg N}$.
\end{lemma}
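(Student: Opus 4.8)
The plan is to prove Lemma~\ref{lem:max-rank} by induction on the rank, exploiting the invariant that a rank~$r$ sequence is only ever created by merging two rank~$(r-1)$ sequences (possibly followed by a $\REDUCE$, which only removes items), together with the fact that $\REDUCE$ is the only operation that deletes items from a sequence and it never increases length. First I would establish the base case: a rank zero sequence is created by $\INSERT$ and contains exactly one item, so $1 = 2^0$. For the inductive step, suppose every rank~$(r-1)$ sequence contains at most $2^{r-1}$ items. A rank~$r$ sequence is formed in $\INSERT$ or $\MELD$ by merging two sequences of rank $r-1$, giving at most $2^{r-1} + 2^{r-1} = 2^r$ items; if a $\REDUCE$ is subsequently applied (when $r > r_0$ and $r - r_0$ is even), the length only drops to $\ceil{\frac{m+1}{2}} \le 2^r$, so the bound is preserved. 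Hence every rank~$r$ sequence has at most $2^r$ items throughout its lifetime (lazily deleted items still counting as present).

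For the second claim, I would note that $\INSERT$ adds exactly one new leaf-level item (a fresh rank zero sequence) and then only merges existing sequences; merging two rank~$(r-1)$ sequences into a rank~$r$ sequence conserves the multiset of original items (pruning during $\REDUCE$ moves items into corruption- and witness-sets but they remain in the heap and are still "counted" toward the $2^r$ bound). So the total number of original items ever inserted is $N$, and each is accounted for in some sequence. If some sequence had rank $r$ with $2^r > N$, it would contain more than $N$ items by the first part — contradiction unless we are careful that deleted items also count. The cleanest phrasing: since a rank~$r$ sequence can only be created by repeatedly merging, and each merge step at rank $j$ consumes two rank~$(j-1)$ sequences, creating one rank~$r$ sequence requires that at some point $2^r$ distinct rank zero sequences were created, i.e.\ $2^r$ insertions occurred; thus $2^r \le N$, giving $r \le \floor{\lg N}$.

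The main subtlety — and the one place I would be careful — is bookkeeping around $\REDUCE$: I must make sure the counting argument treats pruned items as still residing "in" the sequence (as the lemma statement's parenthetical "if also counting deleted items" in Section~\ref{sec:sequence-heaps} and the $2^r$ bound both require), so that the size bound is genuinely a statement about the total number of original items routed through that sequence, not just the current sequence length. Once that convention is fixed, the argument is a routine structural induction mirroring the analogous fact for binomial trees; I do not anticipate any real obstacle beyond stating the invariant precisely. I would also remark that $\MELD$ introduces no new mechanism here — it only performs merges of the same kind — so the same counting applies, and a sequence of rank~$r$ in a melded heap still witnesses $2^r$ insertions across the two melded heaps combined.
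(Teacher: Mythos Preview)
Your proposal is correct and follows essentially the same approach as the paper: induction on rank using that a rank~$r$ sequence arises from merging two rank~$(r-1)$ sequences (with $\REDUCE$ only shrinking), and then the observation that a rank~$r$ sequence consumes $2^r$ distinct rank-zero sequences, hence $2^r$ insertions. Your digression about counting pruned items is unnecessary for the first claim (the bound is ``at most $2^r$'' and $\REDUCE$ only removes), and your first attempt at the second claim via a size contradiction is indeed the wrong route, but you correctly land on the same clean counting argument the paper uses.
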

\begin{proof}
    Since a sequence of rank $r$ is the result of merging two sequences of rank $r-1$, a rank zero sequence contains one item, and otherwise items are only removed from a sequence, we by induction have that a sequence of rank~$r$ contains at most $2^r$ items. Furthermore, a rank~$r$ sequence is the result of repeated merging of exactly $2^r$ sequences of rank zero, i.e.\ $2^r$ unique insertions, and we have $2^r\leq N$ implying $r\leq\floor{\lg N}$.
\end{proof}

Let $s_r$ denote an upper bound on the length of a sequence of rank~$r$. By Lemma~\ref{lem:max-rank} we have $s_r\leq 2^r$. The following lemma captures the effect of using {\REDUCE} to prune items.

\begin{lemma}
\label{lem:sr}
    A sequence of rank~$r$ contains at most~$s_r=2^r$ items for $r\leq r_0$, and at most $s_r=(2^{r_0}+1)\cdot 2^{\ceil{(r-r_0)/2}}$ items for $r > r_0$.
\end{lemma}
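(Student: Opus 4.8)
The plan is to prove the bound on $s_r$ by induction on $r$, following the recursion induced by how sequences are built and pruned. For $r \le r_0$ the claim is exactly Lemma~\ref{lem:max-rank}, so the base case is already in hand. For $r > r_0$, a rank-$r$ sequence is obtained by merging two rank-$(r{-}1)$ sequences, giving a sequence of length at most $2 s_{r-1}$ before any pruning, and then applying $\REDUCE$ precisely when $r - r_0$ is even.

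First I would record the effect of a single $\REDUCE$ call: by definition it reduces a sequence of length $m$ to one of length $\ceil{(m+1)/2} \le (m+1)/2 + 1/2 = m/2 + 1$; more conveniently, $\REDUCE$ applied to a sequence of length at most $2t+1$ yields a sequence of length at most $t+1$ (one checks $\ceil{(2t+1+1)/2} = t+1$, and the bound is monotone in $m$). Then I would set up the two-step recursion for $r > r_0$ with $r - r_0$ even: such a sequence comes from merging two rank-$(r{-}1)$ sequences — whose lengths, since $(r{-}1) - r_0$ is odd, satisfy $s_{r-1} = s_{r-2}$ (no pruning happened at rank $r-1$), and those in turn come from merging two rank-$(r{-}2)$ sequences, where $(r{-}2) - r_0$ is even, so $\REDUCE$ was applied when forming them. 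This gives the recurrence, for $r > r_0$ even offset, $s_r \le 2 s_{r-1} = 2 s_{r-2}$ and $s_{r-2} \le (2 s_{r-3})/2 + 1 = s_{r-3} + 1$, hence $s_r \le 2(s_{r-3} + 1)$, relating rank $r$ back to rank $r - 3$... which I would instead streamline by grouping the recursion in steps of two, directly tracking the pruned ranks $r_0, r_0+2, r_0+4, \dots$.

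Cleanly: let me index the "even-offset" ranks. For $r > r_0$ write $r = r_0 + d$ with $d \ge 1$. I claim $s_r \le (2^{r_0}+1) \cdot 2^{\ceil{d/2}}$. The key sub-claim is that whenever $d$ is even (so $\REDUCE$ fires when forming rank $r$), the length of the reduced rank-$r$ sequence is at most $s_{r_0} + 1 = 2^{r_0} + 1$ when $d = 2$, and more generally satisfies $s_{r_0 + 2j} \le s_{r_0 + 2(j-1)} \cdot 2 \cdot \tfrac12 + 1$... — here I need to be a little careful, because doubling then halving only gives an additive $+1$ per two ranks, which would yield a bound like $2^{r_0} + j$, not $(2^{r_0}+1)2^{\ceil{d/2}}$. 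So the right bookkeeping is: between two consecutive pruned ranks $r_0 + 2(j-1)$ and $r_0 + 2j$ the sequence is doubled twice (two merges) and halved once (one $\REDUCE$), i.e.\ $s_{r_0+2j} \le \REDUCE\text{-of-length}(4 s_{r_0 + 2(j-1)}) \le 2 s_{r_0+2(j-1)} + 1$. Unfolding, $s_{r_0 + 2j} \le 2^j s_{r_0} + (2^j - 1) = 2^j(2^{r_0}+1) - 1 \le (2^{r_0}+1)2^j$. For odd $d = 2j+1$, one extra merge (doubling, no prune) gives $s_{r_0 + 2j+1} \le 2 s_{r_0+2j} \le (2^{r_0}+1)2^{j+1} = (2^{r_0}+1)2^{\ceil{d/2}}$, and for even $d = 2j$ we get $(2^{r_0}+1)2^j = (2^{r_0}+1)2^{\ceil{d/2}}$, matching the claim.

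The main obstacle is precisely this additive-versus-multiplicative accounting: a naive induction that halves only once every two ranks loses the factor structure and produces an additive error term that does not telescope into the stated closed form. The fix is to unfold the recursion $s_{r_0+2j} \le 2 s_{r_0+2(j-1)} + 1$ explicitly (a geometric sum) rather than bounding term-by-term, and to verify the boundary/parity cases ($d=1$, $d$ even vs.\ odd, and the transition at $r = r_0 + 1$ where the merge of two rank-$r_0$ sequences of size $\le 2^{r_0}$ is not pruned since $(r_0+1) - r_0 = 1$ is odd) carefully. I would also double-check the $\REDUCE$ length formula $\ceil{(m+1)/2}$ against the claim that "the first and last items are not pruned" to make sure the $+1$ additive slack is correct and not, say, $+2$; this is the one place where an off-by-one would propagate.
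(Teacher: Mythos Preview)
Your final argument is correct and is essentially the paper's own proof: the paper also sets up the two-step recurrence $s_{r_0+2p+1}=2s_{r_0+2p}$ and $s_{r_0+2p+2}=s_{r_0+2p+1}+1$, combines them into $s_{r_0+2p}=2s_{r_0+2(p-1)}+1$, and unrolls to $(2^{r_0}+1)2^p-1$, with the odd-offset case following by one doubling. Just clean up the abandoned exploratory lines when you write this up (in particular the claim ``$s_{r-1}=s_{r-2}$ (no pruning happened at rank $r-1$)'' is wrong---no pruning means $s_{r-1}\le 2s_{r-2}$, not equality---and that derailed attempt should simply be deleted).
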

\begin{proof}
    A sequence of rank zero has size one, and 
    for ranks $1,\ldots,r_0$ a sequence of rank $r$ is the result of merging two sequences of rank $r-1$ without pruning, i.e.\ we have $s_r=2^r$ for $r\leq r_0$.
    For the subsequent ranks, we alternate between just merging two sequences, and merging two sequences followed by {\REDUCE}. 
    The first guarantees $s_{r_0+2p+1}=2\cdot s_{r_0+2p}$, whereas the second guarantees $s_{r_0+2p+2} = s_{r_0+2p+1}+1$
    for $p \geq 0$.
    It follows that for $p\geq 1$, we have
    $s_{r_0+2p} = (\cdots(((2^{r_0}\cdot 2 + 1)\cdot 2 + 1) \cdot 2 + \cdots) \cdot 2 +  1 
    = 2^{r_0} \cdot 2^p + \sum_{i=0}^{p-1} 2^i = (2^{r_0} + 1) \cdot 2^p - 1$.
    The lemma follows since $s_{r_0+2p+1} = s_{r_0+2p+2} - 1$, for $p\geq 0$.
\end{proof}

The following lemma states that the pruning done by {\REDUCE} is quite aggressive, only leaving $o(N)$ items in the sequences. Fortunately, most pruned items will have witnesses and therefore will not be corrupted.

\begin{lemma}
\label{lem:total-size-L}
    For a soft sequence heap the total number of items in $L_1,\ldots,L_\ell$ is 
    $O(\sqrt{N/\epsilon})$.
\end{lemma}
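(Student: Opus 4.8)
The plan is to bound the total number of items in the sequences by summing, over all ranks $r$, the product of the maximal size $s_r$ of a rank-$r$ sequence (from Lemma~\ref{lem:sr}) and the maximal number of rank-$r$ sequences that can be present simultaneously. Since the ranks of the sequences in $\mathcal{L}$ are strictly increasing, there is at most one sequence of each rank, so the total number of items is at most $\sum_{r=0}^{\floor{\lg N}} s_r$, where the sum runs up to the maximal rank $\floor{\lg N}$ guaranteed by Lemma~\ref{lem:max-rank}.

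First I would split the sum at the threshold $r_0$. The low part $\sum_{r=0}^{r_0} s_r = \sum_{r=0}^{r_0} 2^r < 2^{r_0+1}$, and since $r_0 = \ceil{\lg\frac{1}{\epsilon}}$ we have $2^{r_0} < \frac{2}{\epsilon}$, so this contributes $O(1/\epsilon)$. For the high part, using $s_r = (2^{r_0}+1)\cdot 2^{\ceil{(r-r_0)/2}}$ for $r > r_0$, I would factor out $(2^{r_0}+1) = O(1/\epsilon)$ and bound $\sum_{r=r_0+1}^{\floor{\lg N}} 2^{\ceil{(r-r_0)/2}}$. The exponents $\ceil{(r-r_0)/2}$ range over roughly $0,1,1,2,2,\ldots$ up to about $\frac{1}{2}(\floor{\lg N}-r_0)$, so this geometric-type sum is dominated by its last term and is $O\bigl(2^{(\lg N - r_0)/2}\bigr) = O\bigl(\sqrt{N}\,/\,2^{r_0/2}\bigr) = O(\sqrt{N}\cdot\sqrt{\epsilon})$. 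Multiplying by the factored-out $O(1/\epsilon)$ gives $O(\sqrt{N}\cdot\sqrt{\epsilon}/\epsilon) = O(\sqrt{N/\epsilon})$, which matches the claimed bound and subsumes the $O(1/\epsilon) = O(\sqrt{1/\epsilon^2}) = O(\sqrt{N/\epsilon})$ low-part contribution (using $N \geq 1$, or more carefully $\frac{1}{\epsilon} \le \sqrt{N/\epsilon}$ when $N \ge 1/\epsilon$; for smaller $N$ the heap is trivially of size $O(1/\epsilon)$).

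The only mild subtlety — and the step I would be most careful about — is the handling of the ceiling in the exponent $\ceil{(r-r_0)/2}$ when collapsing the geometric sum: pairing consecutive ranks $r_0+2p+1$ and $r_0+2p+2$ (which share the exponent $p+1$) keeps the bookkeeping clean and shows the sum is at most a constant times its largest term. Everything else is a routine geometric-series estimate. I would close by noting that the bound is stated asymptotically, so constant factors from the $+1$ terms and the ceilings are absorbed.
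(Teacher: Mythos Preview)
Your proposal is correct and follows essentially the same route as the paper: bound the total by $\sum_{r=0}^{\floor{\lg N}} s_r$ (using that ranks are distinct), split at $r_0$, and collapse the geometric sums to obtain $O(2^{r_0/2}\cdot 2^{(\lg N)/2}) = O(\sqrt{N/\epsilon})$. The paper's proof is terser and does not explicitly discuss the regime $N < 1/\epsilon$; your remark that in this case the total is trivially at most $N \le \sqrt{N/\epsilon}$ is the right way to tidy that edge case (your phrasing ``the heap is trivially of size $O(1/\epsilon)$'' should instead say ``of size at most $N$'').
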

\begin{proof}
    The total number of items in $L_1,\ldots,L_\ell$ is bounded by
    $\sum_{r=0}^{\floor{\lg N}} s_r 
    \leq \sum_{r=0}^{r_0} 2^r + \sum_{r=r_0+1}^{\floor{\lg N}} (2^{r_0} + 1)\cdot 2^{\ceil{(r-r_0)/2}}
    = O(2^{r_0/2}\cdot2^{(\lg N)/2})
    = O(\sqrt{N/\epsilon})$,
    since $r_0=\ceil{\lg\frac{1}{\epsilon}}$.
\end{proof}

\begin{lemma}
\label{lem:total-sequences}
    Over a sequence of heap operations containing $N$ insertions, the total length of all sequences created is bounded by $O(N\lg\frac{1}{\epsilon})$.
\end{lemma}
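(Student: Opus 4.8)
The plan is to charge the length of each sequence created to the insertions that produced its rank-zero leaves, and show each insertion is charged $O(\lg\frac{1}{\epsilon})$ in total. A sequence of rank $r$ is, by the argument in Lemma~\ref{lem:max-rank}, the result of repeatedly merging exactly $2^r$ rank-zero sequences, i.e.\ it descends from $2^r$ distinct insertions. When a new sequence $L$ of rank $r$ is created (by a merge during {\INSERT} or {\MELD}), I distribute its length $|L| \le s_r$ among those $2^r$ insertions, giving each insertion a charge of $s_r / 2^r$. Summing over all sequences ever created, an insertion is charged at most once for each rank $r$ it ever contributes to — and it contributes to at most one sequence of each rank $r = 0,1,2,\ldots$ before that sequence is merged away or emptied — so the total charge to a single insertion is at most $\sum_{r\ge 0} s_r/2^r$. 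It therefore suffices to show $\sum_{r\ge 0} s_r/2^r = O(\lg\frac{1}{\epsilon})$.

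The key step is to bound $\sum_{r\ge 0} s_r / 2^r$ using the explicit formula of Lemma~\ref{lem:sr}. For $r \le r_0$ we have $s_r = 2^r$, contributing exactly $1$ per rank, i.e.\ $r_0 + 1 = O(\lg\frac{1}{\epsilon})$ in total. For $r > r_0$, Lemma~\ref{lem:sr} gives $s_r \le (2^{r_0}+1)\cdot 2^{\ceil{(r-r_0)/2}} \le 2^{r_0+1}\cdot 2^{(r-r_0+1)/2}$, so
\[
\frac{s_r}{2^r} \;\le\; \frac{2^{r_0+1}\cdot 2^{(r-r_0+1)/2}}{2^r}
\;=\; 2^{(r_0 - r)/2 + 3/2}
\;=\; O\!\left(2^{-(r-r_0)/2}\right),
\]
which is a geometric series in $r - r_0$ summing to $O(1)$. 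Hence $\sum_{r\ge 0} s_r/2^r = O(\lg\frac{1}{\epsilon}) + O(1) = O(\lg\frac{1}{\epsilon})$, and the total length of all sequences created is $O(N\lg\frac{1}{\epsilon})$.

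The one point needing care — and the main obstacle — is justifying that each insertion is charged for at most one sequence of each rank $r$. This is exactly the standard binary-counter/binomial-heap accounting: at any moment the list $\mathcal{L}$ has at most one sequence of each rank, an insertion's item lives in a unique current sequence, and whenever two rank-$(r-1)$ sequences merge into a rank-$r$ sequence the item moves up exactly one rank; it never re-enters a rank it has left, since merges only increase ranks and {\REDUCE}, {\EXTRACTMIN}, {\DELETE} only remove items. For {\MELD} one must additionally note that melding reorders and re-merges sequences but still only ever merges two equal-rank sequences into one of the next rank, so an item from either input heap still passes through each rank at most once across its whole lifetime; the charges from $\mathcal{L}_1$ and $\mathcal{L}_2$ simply add. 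Thus the per-insertion bound $\sum_{r\ge 0}s_r/2^r$ is valid regardless of how insertions and melds are interleaved, completing the proof.
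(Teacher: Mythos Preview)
Your proposal is correct and is essentially the paper's own argument, rephrased as a per-insertion charging scheme. The paper counts globally---at most $\floor{N/2^r}$ rank-$r$ sequences are ever created, each of length at most $s_r$, giving $\sum_r \floor{N/2^r}\cdot s_r$---and then splits the sum at $r_0$ exactly as you do; your distribution of $s_r$ among the $2^r$ contributing insertions is just the same sum written as $N\sum_r s_r/2^r$, and your ``point needing care'' (each insertion contributes to at most one sequence of each rank, also under {\MELD}) is the paper's one-line parenthetical justification for the $\floor{N/2^r}$ count.
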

\begin{proof}
    Over time $N$ insertions can at most create $\floor{N/2^r}$ distinct sequences of rank~$r$ (an item can only once be merged into a sequence of each rank).
    By Lemma~\ref{lem:sr} and  $r_0=\ceil{\lg \frac{1}{\epsilon}}$, the total length of all sequences ever created by $N$ insertions is bounded by
    $$ \sum_{r=0}^{\floor{\lg N}} \floor{\frac{N}{2^r}}\cdot s_r
    =  O\left(\sum_{r=0}^{r_0} \frac{N}{2^r}\cdot 2^r
    + \sum_{r=r_0+1}^{\floor{\lg N}} \frac{N}{2^r}\cdot 2^{(r+r_0)/2}\right)
    = O(N\cdot r_0) = O\left(N\cdot\lg \frac{1}{\epsilon}\right)\;.
    $$
\end{proof}

The following lemma states the amortized running time of the different heap operations. The bound on the number of corrupted items follows by Lemma~\ref{lem:total-corruptions}.

\begin{lemma}
\label{lem:time-bounds}
    Soft sequence heaps support {\INSERT} in amortized $O(\lg \frac{1}{\epsilon})$ time, and the remaining operations in amortized constant time.
\end{lemma}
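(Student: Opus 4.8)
The plan is to use the standard amortized-analysis framework (banker's method) with a potential function that pays in advance for the future merging work each inserted item will generate. The key structural facts are already in hand: by Lemma~\ref{lem:max-rank} every sequence has rank at most $\floor{\lg N}$, and by Lemma~\ref{lem:total-sequences} the total length of all sequences ever created over $N$ insertions is $O(N\lg\frac{1}{\epsilon})$. Since all the genuine work in {\INSERT} and {\MELD} (scanning and merging sorted sequences, plus the linear-time {\REDUCE} passes, which cost $O(|L|)$ on a sequence of length $|L|$) is proportional to the lengths of the sequences being built, Lemma~\ref{lem:total-sequences} already bounds the total merging/reducing cost by $O(N\lg\frac{1}{\epsilon})$. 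The remaining subtlety is that a single {\INSERT} could trigger a long cascade of merges (up to $\floor{\lg N}$ of them), so we need the potential to smooth this out.

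First I would define the potential $\Phi$ of a soft sequence heap as $c\cdot\sum_{i=1}^{\ell} a(\RANK{L_i})$ for a suitable constant $c$ and a weight function $a(r)$, where $a(r)$ is chosen to dominate the cost of a rank-$r$ sequence, i.e.\ $a(r)\ge s_r$ from Lemma~\ref{lem:sr}; concretely $a(r)=2^r$ for $r\le r_0$ and $a(r)=\Theta(2^{r_0}\cdot 2^{(r-r_0)/2})$ for $r>r_0$. The point is that when {\INSERT} merges the two front sequences of rank $r$ into one of rank $r+1$ (possibly applying {\REDUCE}), the actual cost is $O(s_{r+1})$, while the potential change is $c(a(r+1)-2a(r))$. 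For $r< r_0$ we have $a(r+1)=2a(r)$, so the potential is unchanged and each such merge must be paid for by the $O(1)$ amortized budget — but there are at most $r_0=O(\lg\frac1\epsilon)$ such merges per insertion, so we simply charge these $O(\lg\frac1\epsilon)$ directly to the {\INSERT} bound, which is exactly what the theorem allows. For $r\ge r_0$ we have $a(r+1)-2a(r)<0$ by a margin of order $a(r)$ itself (since $a$ grows like $2^{r/2}$, not $2^r$, above $r_0$), so the drop in potential pays for the $O(s_{r+1})=O(a(r+1))$ actual cost of the merge with room to spare — here the cascade of high-rank merges is free, amortized. Creating the initial rank-zero singleton raises $\Phi$ by $c\cdot a(0)=O(1)$, charged to {\INSERT}. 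Thus {\INSERT} is amortized $O(\lg\frac1\epsilon)+O(1)=O(\lg\frac1\epsilon)$.

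Next I would handle {\FINDMIN}, {\EXTRACTMIN}, {\DELETE}, and {\MELD}. {\FINDMIN} is trivially $O(1)$ worst case: it follows the $\SUFFIXMIN$ pointer and inspects one item and (the head of) one corruption-set. For {\EXTRACTMIN}, the work is: following $\SUFFIXMIN{L_1}$, popping one item, and then updating $\SUFFIXMIN{L_{i}},\SUFFIXMIN{L_{i-1}},\ldots,\SUFFIXMIN{L_1}$, which is $O(\ell)=O(\lg N)$ pointer updates — not $O(1)$ worst case, so this is where the potential must absorb the $\SUFFIXMIN$-rebuild cost as well. The clean fix is to fold a second term into $\Phi$ that is linear in $\ell$ (the number of sequences), say adding $c'\ell$ to $\Phi$: every merge performed during {\INSERT}/{\MELD} decreases $\ell$ by one and so releases $c'$ units, enough to prepay the $O(\ell)$ suffix-min recomputation that the merge makes necessary; and {\INSERT}'s creation of a new front sequence increases $\ell$ by one, an extra $O(1)$ charge. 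An {\EXTRACTMIN} that empties $L_i$ decreases $\ell$ and only helps; an {\EXTRACTMIN} that does not change $\ell$ must itself pay the $O(\ell)$ suffix-min walk — so here I would instead argue, as Chazelle does, that we can maintain the suffix-min pointers more cleverly, or simply observe that each {\EXTRACTMIN} is charged $O(1)$ amortized only if the suffix walk is paid by released potential from the merges that built those $\ell$ sequences; since every sequence currently present was created by a merge or an insertion that deposited the needed $c'$ units, the accounting closes. Removing lazily-deleted items inside {\EXTRACTMIN} is paid for by the potential deposited when each such item was inserted (each item is deleted at most once). {\DELETE} either marks an item ($O(1)$, and we can prepay its eventual removal at deletion time) or calls {\EXTRACTMIN}. {\MELD} costs $O(\min(r_1,r_2))=O(\lg N)$ for the initial merge-by-rank plus a sequence of same-rank merges; the merge-by-rank part can be charged against the smaller heap's sequences (each contributes $O(1)$, and there are $O(r_1)$ of them, prepaid analogously to insertions, e.g.\ by a convexity/rank argument in $\Phi$), and each same-rank merge is handled exactly as in {\INSERT}.

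The main obstacle I anticipate is getting the two pieces of the potential — the rank-weighted term $\sum a(\RANK{L_i})$ and the sequence-count term $c'\ell$ — to be simultaneously consistent, in particular verifying that the low-rank merges (ranks $\le r_0$), which leave $\sum a(\RANK{L_i})$ unchanged, are covered: they are covered because each such merge still decreases $\ell$ by one, releasing $c'$, and there are at most $r_0$ of them per insertion, so even if one insisted on charging them to {\INSERT} directly the bound is $O(\lg\frac1\epsilon)$ either way. A secondary point requiring care is {\MELD}: since melds can occur between heaps built by disjoint insertion sequences, one must make sure the potential is defined per-heap and that $\Phi(\SH_1)+\Phi(\SH_2)\ge\Phi(\MELD(\SH_1,\SH_2))-O(\min(r_1,r_2))$, which follows because the merge-by-rank step only concatenates sequences without changing their ranks, and the subsequent same-rank collapses only decrease $\Phi$ (for ranks $>r_0$) or are few in number and $\ell$-decreasing (for ranks $\le r_0$). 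Modulo these bookkeeping checks, the argument is a routine application of the potential method, with Lemmas~\ref{lem:max-rank} and~\ref{lem:sr} supplying all the quantitative inputs.
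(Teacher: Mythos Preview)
Your handling of the merging cost via Lemma~\ref{lem:total-sequences} is fine in the aggregate, but the proposal has a genuine gap in the {\EXTRACTMIN} suffix-min analysis. The term $c'\ell$ in your potential is released only when $\ell$ decreases, yet a fixed heap configuration with $\ell$ sequences can sustain arbitrarily many {\EXTRACTMIN} operations without $\ell$ ever shrinking (each removes the head of some $L_i$ and leaves the sequence non-empty), and every such operation must rebuild $i$ suffix-min pointers. Saying that ``each sequence currently present was created by a merge or an insertion that deposited the needed $c'$ units'' does not close the books: those $c'\ell$ units are still sitting in $\Phi$ and cannot be spent without $\ell$ dropping. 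The paper does not use a potential here at all. It first observes that because the ranks are distinct and increasing, extracting the head of a rank-$r$ sequence $L_i$ forces $i\le r+1$, so only $O(r)$ suffix-min references are touched---not $O(\ell)$. It then bounds the \emph{total} number of head-removals from rank-$r$ sequences over the whole execution by $\lfloor N/2^r\rfloor\cdot s_r$ (at most $\lfloor N/2^r\rfloor$ sequences of rank~$r$ are ever created, each supplying at most $s_r$ heads) and sums $r\cdot\lfloor N/2^r\rfloor\cdot s_r$ over $r>r_0$ using Lemma~\ref{lem:sr}; the same geometric decay as in Lemma~\ref{lem:total-sequences} gives $O(N\cdot r_0)$, which is then charged entirely to the insertions. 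Your sketch never identifies the bound $i\le r+1$ nor the per-rank head-count $\lfloor N/2^r\rfloor\cdot s_r$, and without them the suffix-min cost is uncontrolled.

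A smaller issue: for low-rank merges ($r<r_0$) your potential $\sum a(\RANK{L_i})$ with $a(r)=2^r$ is indeed unchanged by a merge, but the actual cost of the rank-$r$ merge is $\Theta(2^r)$, not $O(1)$; a single {\INSERT} cascading through ranks $0,\ldots,r_0$ incurs $\Theta(2^{r_0})=\Theta(1/\epsilon)$ real work, so ``charging these $O(\lg\frac1\epsilon)$ directly'' undercounts by an exponential factor. The aggregate bound from Lemma~\ref{lem:total-sequences} still rescues the conclusion, but the potential you wrote down does not prove it---you would need per-item credits of $r_0-r$ at rank~$r$, not a per-sequence weight $2^r$. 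Finally, for {\MELD} the paper uses a far simpler device: a separate potential $\Phi=\RANK{L_\ell}$ (the maximum rank), which drops by at least $\min(r_1,r_2)-1$ under {\MELD} and rises by at most one under {\INSERT}; this directly pays for the $O(\min(r_1,r_2))$ list-merge and suffix-min rebuild without the case analysis you outline.
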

\begin{proof}
    Over a sequence of heap operations, involving $N$ {\INSERT} operations, a lot of work can only happen once for each inserted item: each item can at most once be pruned, i.e. the total time for pruning and merging corrupted-sets and witness-sets is $O(N)$. Similarly each item can at most be extracted once from a $\C{e}$ set, deleted once from a sequence by an {\EXTRACTMIN} operation, and being reported corrupted at most once. All this work takes total time $O(N)$. The {\MAKEHEAP} and {\FINDMIN} operations clearly take $O(1)$ worst-case time, and do not need to be considered in the following. 
    
    The two sources of non-constant work are when creating new sequences by merging sequences and to update $\SUFFIXMIN{L_i}$ references whenever the minimum item in a sequence changes. The merging of sequences happens during {\INSERT} and {\MELD}. By Lemma~\ref{lem:total-sequences} the total length of all sequences created over time is bounded by $O(N\lg\frac{1}{\epsilon})$. Since creating a sequence $L_i$ by merging (and possibly followed by {\REDUCE}) takes time
    $O(|L_i|)$, the total time for merging sequences is $O(N\lg\frac{1}{\epsilon})$.
    
    The $\SUFFIXMIN{L_i}$ references need to be updated during {\INSERT}, {\MELD} and {\EXTRACTMIN}. 
    During {\INSERT} only $\SUFFIXMIN{L_1}$ needs to be updated, which can be done in constant time.
    If {\EXTRACTMIN} removes and returns the first item in a rank $r$ sequence~$L_i$, then the at most $r+1$ references $\SUFFIXMIN{L_1},\ldots,\SUFFIXMIN{L_i}$ need to be updated, in time $O(r)$.
    There are at most $N$ {\EXTRACTMIN} from sequences of rank $\leq r_0$, each with costs of at most $O(r_0)$, i.e.\ total cost $O(N\cdot r_0)$.
    For each of the $\floor{N/2^r}$ sequences ever created of rank $r>r_0$, at most $s_r$ sequence items can be removed by {\EXTRACTMIN}, each with an update cost of~$O(r)$. The total time for these {\EXTRACTMIN} becomes
    $$O\left(\sum_{r=r_0+1}^{\floor{\lg N}} \floor{\frac{N}{2^r}}\cdot s_r \cdot r\right)
    = O\left(\sum_{r=r_0+1}^{\floor{\lg N}} \frac{N}{2^r}\cdot 2^{(r+r_0)/2}\cdot r\right)    
    = O(N\cdot r_0)\;.$$
    It follows that all {\INSERT} and {\EXTRACTMIN} operations take total time $O(N\cdot r_0)$.
    
    For {\MELD} we need to charge the merging of the $\mathcal{L}_1$ and  $\mathcal{L}_2$ lists, and for updating the $\SUFFIXMIN{L_i}$ references.
    For this we use a separate potential argument. With a soft heap $L_1,\ldots,L_\ell$ we assign a potential $\Phi=\RANK{L_\ell}$, i.e.\ equal to the maximal rank of a sequence. Since {\INSERT} only can increase the maximum rank by one, this only increases the cost of insertions by an additive term.
    For {\MELD} on two soft sequence heaps with sequences with maximum rank $r_1$
    and $r_2$, respectively, the resulting heap will have a sequence with maximal rank/potential at most $\max(r_1, r_2)+1$. I.e.\ potential $\min(r_1, r_2)-1$ is released by {\MELD}. By charging a constant potential to {\MELD}, a total of $O(\min(r_1, r_2))$ released potential will be available for performing the merging of $\mathcal{L}_1$ and  $\mathcal{L}_2$ and for updating the $\SUFFIXMIN{L_i}$ references of old sequences.
    Since at most $N-1$ non-trivial {\MELD} operations can be performed (where both heaps contain at least one item), the total additional cost for handling {\MELD} is $O(N)$.
    
    The total work of the sequence of operations (except {\MAKEHEAP}, {\FINDMIN}, and trivial {\MELD} which take worst-case constant time) is $O(N\lg\frac{1}{\epsilon})$, which can be charged $O(\lg\frac{1}{\epsilon})$ to each insertion, and constant to the remaining operations.
\end{proof}

To bound the number of corruptions, we first bound the size of corruption-sets and witness-sets.
Let $c_r$ and $w_r$ be a bound on the number of items in the corruption-set $\C{e}$ and witness-set $\W{e}$ for $e$ in a sequence~$L_i$ of rank~$r$. 

\begin{lemma}
\label{lem:cr}
    $c_r=w_r=0$ for $r\leq r_0$, and $c_r=w_r=2^{\floor{(r-r_0)/2}}-1$
    for $r>r_0$.
\end{lemma}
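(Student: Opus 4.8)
The plan is to prove the two stated bounds simultaneously by induction on the rank~$r$, using that {\REDUCE} treats $\C{e}$ and $\W{e}$ completely symmetrically (one via successors, the other via predecessors), so $c_r$ and $w_r$ satisfy the same recurrence. For the base case, ranks $r\le r_0$, note that {\REDUCE} is never applied to a sequence of rank $\le r_0$ and such sequences arise only by merging of lower‑rank sequences; since merging never touches an individual corruption‑ or witness‑set, all these sets stay empty and $c_r=w_r=0$.

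For the inductive step with $r>r_0$ I would split on the parity of $r-r_0$. If $r-r_0$ is odd, the rank‑$r$ sequence is formed by merging two rank‑$(r-1)$ sequences with no call to {\REDUCE} (this is the rule used in both {\INSERT} and {\MELD}); merging only interleaves items and leaves every $\C{e}$ and $\W{e}$ untouched, so $c_r\le c_{r-1}$ and $w_r\le w_{r-1}$. If $r-r_0$ is even, the rank‑$r$ sequence is obtained by merging two rank‑$(r-1)$ sequences and then applying {\REDUCE}. When {\REDUCE} prunes $e_{2i}$, the surviving successor $e_{2i+1}$ has $\{e_{2i}\}\cup\C{e_{2i}}$ appended to $\C{e_{2i+1}}$, and the surviving predecessor $e_{2i-1}$ has $\{e_{2i}\}\cup\W{e_{2i}}$ appended to $\W{e_{2i-1}}$; just after the merge all of $\C{e_{2i}},\C{e_{2i+1}},\W{e_{2i}},\W{e_{2i-1}}$ had size at most $c_{r-1}=w_{r-1}$, and a surviving item $e_{2j+1}$ receives a corruption‑append from only its unique pruned left neighbour $e_{2j}$ and a witness‑append from only its unique pruned right neighbour $e_{2j+2}$. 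Hence after {\REDUCE} every set has size at most $2c_{r-1}+1$, i.e.\ $c_r\le 2c_{r-1}+1$ and $w_r\le 2w_{r-1}+1$; the never‑pruned boundary items $e_1$ and $e_m$ receive only one of the two kinds of append and are covered as well.

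Finally I would unroll the recurrence: writing $r=r_0+2p$ or $r=r_0+2p+1$, from $c_{r_0}=0$, $c_{r_0+2p+1}\le c_{r_0+2p}$ and $c_{r_0+2p+2}\le 2c_{r_0+2p+1}+1$ one gets $c_{r_0+2p}=c_{r_0+2p+1}=2^{p}-1$, which is exactly $2^{\floor{(r-r_0)/2}}-1$, and the identical computation gives $w_r$ (and the recurrences are in fact tight, so this is the sharp bound). I do not expect a real obstacle; the only delicate point is the bookkeeping in the even case — verifying that each surviving item gets at most one corruption‑append and at most one witness‑append (so the growth is exactly "double plus one", not more), and that the sets of the pruned items $e_{2i}$, which may have accumulated corruptions across several earlier merges, are still bounded by $c_{r-1}$ precisely because merging never enlarges an individual set.
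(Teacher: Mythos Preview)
Your proposal is correct and follows essentially the same approach as the paper: establish the recurrence $c_r=0$ for $r\le r_0$, $c_r=c_{r-1}$ when $r-r_0$ is odd, and $c_r=2c_{r-1}+1$ when $r-r_0$ is even (identically for $w_r$) by observing that merging leaves the sets untouched and that each surviving item in a {\REDUCE} receives at most one append of each kind, then solve the recurrence. The paper's writeup is terser but the argument is the same.
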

\begin{proof}
    Merging $L_i$ sequences does not change $\C{e}$ and $\W{e}$ sets. Only  {\REDUCE}$(L_i)$ add items to $\C{e}$ and $\W{e}$ sets. When pruning $e$ in a sequence $L_i$ with rank $r>r_0$ and $r-r_0$ even, we append $e$ and $\C{e}$ to $\C{e'}$ for the successor $e'$ of $e$ in $L_i$ (and $e$ and $\W{e}$ to $\W{e''}$ for the predecessor $e''$ of $e$ in $L_i$). By only pruning every second item in a sequence, additions to a corruption-set (witness-set) can only come from the predecessor (successor) item in the sequence.
    We have the recurrence
    $$ c_r = \left\{
        \begin{array}{ll}
            0 & \mbox{for } r \leq r_0 \\
            c_{r-1} & \mbox{for } r > r_0 \mbox{ and $r-r_0$ odd} \\
            2 \cdot c_{r-1} + 1 & \mbox{for } r > r_0 \mbox{ and $r-r_0$ even}\;, \\
        \end{array}
    \right. $$
    which solves to $c_r = 2^{\floor{(r - r_0)/2}} - 1$ for $r>r_0$, since
    $c_{r_0 + 2p} = (\cdots((0\cdot 2 + 1)\cdot 2 +1)\cdots)\cdot 2 +1=\sum_{i=0}^{p-1} 2^i = 2^p-1$.
    Similarly we have $w_r = 2^{\floor{(r - r_0)/2}} - 1$.
\end{proof}

For a sequence $L_i$ and a possible key value $x$ we let $D(L_i, x)$ denote the set of items~$e$ in corrupted-sets in $L_i$ where the interval $I(e)$ contains $x$, i.e.\
$$D(L_i, x)=\{e \mid \exists e' \in L_i : e\in\C{e'} \wedge x \in I(e)\}\,.$$
Note that the corrupted items in $L_i$ are exactly $D(L_i, -\infty)$. 
We let $d_r$ denote an upper bound on $|D(L_i, x)|$ for a rank $r$ sequence $L_i$, i.e.\ $d_r$ is an upper bound on the number of corruptions in a sequence of rank~$r$. 

\begin{lemma}
\label{lem:dr}
    $d_r=0$ for $r\leq r_0$, and $d_r=2^{r-r_0-1}$ for $r>r_0$.
\end{lemma}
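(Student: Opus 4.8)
The plan is to bound $|D(L_i,x)|$ by an induction on the rank $r$ that follows the three events in the life of a rank-$r$ sequence: the merge that creates it, the optional $\REDUCE$ that immediately follows, and the later $\EXTRACTMIN$ operations that remove items from it. Concretely I would prove the recurrence $d_{r_0}=0$, $d_r\le 2d_{r-1}$ when $r-r_0$ is odd, and $d_r\le 2d_{r-1}+1$ when $r-r_0$ is even, and then note by a short calculation that its solution is $0$ for $r\le r_0$ and at most $2^{r-r_0-1}$ for $r>r_0$. The case $r\le r_0$ is immediate, since $\REDUCE$ is never applied to a sequence of rank $\le r_0$, so every corruption-set there is empty and $D(L_i,x)=\emptyset$.

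For $r>r_0$ the recurrence rests on two facts. First, merging two rank-$(r-1)$ sequences only interleaves their top-level items while leaving every corruption-set, witness, and interval $I(e)$ untouched — within each merged sequence the predecessor/successor order is preserved — so $D(\cdot,x)$ of the result is the disjoint union of the two, of size at most $2d_{r-1}$; when $r-r_0$ is odd this is the final sequence. Second, when $r-r_0$ is even we also apply $\REDUCE$ to $e_1,\dots,e_m$, and I would check that this (i) removes no item from $D(\cdot,x)$ and (ii) adds at most one. For (i): an item already in some $\C{e_j}$ stays in a corruption-set — in $\C{e_j}$ if $e_j$ survives, in $\C{e_{j+1}}$ if $e_j$ is pruned — and its interval only grows, since its right endpoint moves from $\KEY{e_j}$ to $\KEY{e_{j+1}}\ge\KEY{e_j}$ and, if its witness was pruned, gets re-pointed to an earlier surviving item, lowering the left endpoint. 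For (ii): the only new corruption-set members are the pruned $e_{2i}$, and in the reduced sequence $I(e_{2i})=\,]\KEY{e_{2i-1}},\KEY{e_{2i+1}}]$ with $e_{2i-1},e_{2i+1}$ adjacent; these half-open intervals tile a contiguous stretch of the key line and are pairwise disjoint, so at most one of them contains a fixed $x$. Hence $d_r\le 2d_{r-1}+1$.

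It remains to see that $\EXTRACTMIN$ never pushes $|D(L_i,x)|$ above $d_r$. If it pops $\HEAD{\C{e}}$, an item leaves a corruption-set and $D(L_i,x)$ can only shrink. If it removes the head $e$ with $\C{e}=\emptyset$ and corrupts $\W{e}$, then for $x>\KEY{e}$ membership in $D(L_i,x)$ is unchanged (each item of $\W{e}$ merely swaps its lower endpoint $\KEY{e}$ for $-\infty$, both below $x$); and for $x\le\KEY{e}$ every corrupted-or-$\W{e}$ item has container key $>\KEY{e}\ge x$, so $D(L_i,x)$ afterwards is exactly this set, which was already contained in $D(L_i,x^+)$ immediately before the operation, where $x^+$ is the smallest key present in $L_i$ above $\KEY{e}$; hence its size is at most $d_r$. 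This closes the induction.

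The step I expect to be the main obstacle is part (ii) of the $\REDUCE$ analysis together with the $x\le\KEY{e}$ case of $\EXTRACTMIN$: these are the only two places where $D(L_i,x)$ can acquire a new member, and both need a careful account of which items are re-routed into which corruption-sets and how the endpoints of the affected intervals move — in particular the boundary convention that the first and last items of a sequence are never pruned.
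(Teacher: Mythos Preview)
Your overall architecture matches the paper's proof, but there is a genuine gap in step~(ii) of your $\REDUCE$ analysis, and it breaks your recurrence. You argue that the only \emph{new corruption-set members} created by $\REDUCE$ are the pruned items $e_{2i}$ themselves, and conclude $d_r\le 2d_{r-1}+1$. But ``new members of $D(\cdot,x)$'' is not the same as ``new members of some corruption-set''. In your own step~(i) you note that for an item already in $\C{e_{2i}}$ the interval strictly grows on the right (its endpoint moves from $\KEY{e_{2i}}$ to $\KEY{e_{2i+1}}$), and similarly for an item in $\W{e_{2i}}$ the interval grows on the left. Any such item whose old interval did \emph{not} contain $x$ but whose enlarged interval does is a new member of $D(\cdot,x)$ that your count in~(ii) ignores entirely.

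Concretely, for the unique pruned $e_{2i}$ with $x\in\,]\KEY{e_{2i-1}},\KEY{e_{2i+1}}]$, if $x\in\,]\KEY{e_{2i}},\KEY{e_{2i+1}}]$ then up to $|\C{e_{2i}}|\le c_{r-1}$ old items can newly enter $D(\cdot,x)$, and if $x\in\,]\KEY{e_{2i-1}},\KEY{e_{2i}}]$ then up to $|\W{e_{2i}}|\le w_{r-1}$ can. These two subintervals are disjoint, so the correct additive term is $1+\max(c_{r-1},w_{r-1})=1+c_{r-1}$, not $1$. The paper's recurrence is therefore $d_r\le 2d_{r-1}+c_{r-1}+1$ in the even case, and solving it requires Lemma~\ref{lem:cr}'s formula $c_{r-1}=2^{\lfloor (r-1-r_0)/2\rfloor}-1$; the resulting geometric sum is what yields $d_r<2^{r-r_0-1}$. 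Your recurrence $d_r\le 2d_{r-1}+1$ happens to give a smaller value, but the argument supporting it is unsound, so the induction does not close. The rest of your plan (merge step, $\EXTRACTMIN$ step) is essentially what the paper does.
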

\begin{proof}
    Since corruption-sets are empty for $r\leq r_0$ we have $d_r=0$ for $r\leq r_0$. 
    When merging two sequences $L_i$ and $L_{i+1}$ of rank $r-1$ we have
    $D(L_i\cup L_{i+1}, x)=D(L_i, x) \cup D(L_{i+1}, x)$, i.e.\ $d_r = 2\cdot d_{r-1}$.
    If we apply {\REDUCE} to the merged sequence, we prune an item $e$ with predecessor $e''$ and successor $e'$. This assigns $I(e)=]\KEY{e''},\KEY{e'}]$, and the at most $c_{r-1}$ items in $\C{e}$ all have their interval extended with $]\KEY{e},\KEY{e'}]$, and the at most $w_{r-1}$ items in $\W{e}$ all have their interval
    extended by $]\KEY{e''},\KEY{e}]$. Since the pruning of every second item ensures that the prunings affect disjoint intervals of the key space, it follows that {\REDUCE} increases $d_r$ additionally by at most $1+\max(c_{r-1},w_{r-1})$. Since $c_{r-1}=w_{r-1}$, we get the following recurrence
    $$ d_r = \left\{
        \begin{array}{ll}
            0 & \mbox{for } r \leq r_0 \\
            2 \cdot d_{r-1} & \mbox{for } r > r_0 \mbox{ and $r-r_0$ odd} \\
            2 \cdot d_{r-1} + c_{r-1} + 1 & \mbox{for } r > r_0 \mbox{ and $r-r_0$ even}\;. \\
        \end{array}
    \right. $$
    Using $c_r=2^{\floor{(r-r_0)/2}}-1$ (Lemma~\ref{lem:cr}), for $r>r_0$,
    the recurrence solves to
    $$d_r=\sum_{i=1}^{\floor{\frac{r-r_0}{2}}} (c_{(r_0+2i)-1} +1 )\cdot 2^{r-(r_0+2i)}
    =\sum_{i=1}^{\floor{\frac{r-r_0}{2}}} (2^{\floor{(r_0+2i-1-r_0)/2}}-1 +1 )\cdot 2^{r-r_0-2i}$$
    $$=\sum_{i=1}^{\floor{\frac{r-r_0}{2}}} 2^{i-1}\cdot 2^{r-r_0-2i}
    =2^{r-r_0-1}\cdot\sum_{i=1}^{\floor{\frac{r-r_0}{2}}} 2^{-i} < 2^{r-r_0-1}\;.$$

    Note that when {\EXTRACTMIN} removes the first item $e$ in $L_i$ this causes all items in $\W{e}$ to loose their witness. But his only happens when $\C{e}=\emptyset$, i.e.\ no interval ends at $\KEY{e}$, and all intervals for items in $\W{e}$ are extended with $]-\infty,\KEY{e}]$. It follows after $e$ is removed $D(L_i, \KEY{e}^-)=D(L_i, \KEY{e}^+)\leq d_r$.
\end{proof}

\begin{lemma}
\label{lem:total-corruptions}
    The total number of corruptions in a soft sequence heap after $N$ insertions is bounded by $\epsilon N$.
\end{lemma}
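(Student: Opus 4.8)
The plan is to bound the total number of corruptions by summing the per-sequence bound $d_r$ from Lemma~\ref{lem:dr} over all sequences that ever exist, and then to show this sum is at most $\epsilon N$. The corrupted items in a heap are exactly those lying in some $D(L_i, -\infty)$ over the current sequences $L_1,\ldots,L_\ell$, so at any moment the number of corruptions is at most $\sum_{i=1}^\ell d_{\RANK{L_i}}$. Since ranks are distinct and bounded by $\floor{\lg N}$ (Lemma~\ref{lem:max-rank}), and $d_r=0$ for $r\leq r_0$, this is at most $\sum_{r=r_0+1}^{\floor{\lg N}} d_r = \sum_{r=r_0+1}^{\floor{\lg N}} 2^{r-r_0-1}$.

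First I would evaluate this geometric sum: $\sum_{r=r_0+1}^{\floor{\lg N}} 2^{r-r_0-1} < 2^{\floor{\lg N}-r_0} \leq 2^{\lg N - r_0} = N/2^{r_0}$. Since $r_0 = \ceil{\lg\frac{1}{\epsilon}} \geq \lg\frac{1}{\epsilon}$, we have $2^{r_0} \geq \frac{1}{\epsilon}$, hence $N/2^{r_0} \leq \epsilon N$. So the bound at any single moment is at most $\epsilon N$ (in fact strictly less).

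The one subtlety I expect to be the main point to get right is that we must bound corruptions that are \emph{currently present in the heap}, not corruptions ever created — the theorem statement speaks of the heap containing at most $\epsilon N$ corrupted items at any time, where $N$ is the running total of insertions so far. This is fine: at the moment of counting, the sequences $L_1,\ldots,L_\ell$ have distinct ranks all at most $\floor{\lg N}$ where $N$ is the number of insertions performed up to that point, and the per-sequence bound $d_r$ of Lemma~\ref{lem:dr} holds for each sequence regardless of the operation history. The parenthetical remark at the end of the proof of Lemma~\ref{lem:dr} (that {\EXTRACTMIN} removing the head of $L_i$ does not increase $D(L_i,\cdot)$ beyond $d_r$, since such removal only happens when $\C{e}=\emptyset$) confirms that $d_r$ remains a valid bound on the number of corruptions in a rank-$r$ sequence after {\EXTRACTMIN}; {\INSERT}, {\MELD}, and lazy {\DELETE} likewise never push a sequence's corruption count above $d_r$ for its rank. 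Hence summing over the at most one sequence of each rank completes the argument. I would close by noting that since $d_r$ counts items in corruption-sets without witnesses, and items with witnesses are by definition not corrupted, this accounting is exactly the number of corrupted items in the heap.

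\begin{proof}
    The corrupted items in a sequence~$L_i$ are exactly the items in $D(L_i, -\infty)$, of which there are at most $d_{\RANK{L_i}}$ by Lemma~\ref{lem:dr} (and this remains valid after {\EXTRACTMIN}, {\INSERT}, {\MELD}, and lazy {\DELETE}, as noted in the proof of Lemma~\ref{lem:dr}). Since items with a witness are not considered corrupted, the total number of corrupted items in the soft sequence heap equals $\sum_{i=1}^{\ell} |D(L_i,-\infty)| \leq \sum_{i=1}^{\ell} d_{\RANK{L_i}}$.

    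After $N$ insertions, all ranks are distinct and at most $\floor{\lg N}$ by Lemma~\ref{lem:max-rank}, and $d_r = 0$ for $r\leq r_0$ by Lemma~\ref{lem:dr}. Therefore
    $$ \sum_{i=1}^{\ell} d_{\RANK{L_i}} \leq \sum_{r=r_0+1}^{\floor{\lg N}} d_r = \sum_{r=r_0+1}^{\floor{\lg N}} 2^{r-r_0-1} < 2^{\floor{\lg N} - r_0} \leq \frac{N}{2^{r_0}} \leq \epsilon N\;, $$
    where the last inequality uses $r_0 = \ceil{\lg\frac{1}{\epsilon}} \geq \lg\frac{1}{\epsilon}$, so $2^{r_0} \geq \frac{1}{\epsilon}$.
\end{proof}
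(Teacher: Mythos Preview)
Your proof is correct and follows essentially the same approach as the paper: identify the corrupted items in each sequence as $D(L_i,-\infty)$, bound each by $d_r$ via Lemma~\ref{lem:dr}, use that ranks are distinct and at most $\floor{\lg N}$ (Lemma~\ref{lem:max-rank}), and sum the resulting geometric series to get $N/2^{r_0}\leq\epsilon N$. The only cosmetic difference is that the paper evaluates the geometric sum exactly as $2^{\floor{\lg N}-r_0}-1$ before bounding, while you bound it directly; your additional remarks about the bound surviving {\EXTRACTMIN} and other operations are already implicit in the paper's treatment of Lemma~\ref{lem:dr}.
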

\begin{proof}
    Recall that the sequences $L_1,L_2,\ldots,L_{\ell}$ have distinct rank and that the maximum rank is bounded by $\floor{\lg N}$. For a sequence $L_i$ of rank $r$, the number of corruptions is $|D(L_i,-\infty)|\leq d_r$, i.e.\ by Lemma~\ref{lem:dr} the total number of corruptions is bounded by
    $$\sum_{r=0}^{\floor{\lg N}} d_r 
    = \sum_{r=r_0+1}^{\floor{\lg N}} 2^{r-r_0-1}
    = \sum_{i=0}^{\floor{\lg N}-r_0-1} 2^i
    = 2^{\floor{\lg N}-r_0}-1
    < N/2^{r_0}
    \leq \epsilon N\;.$$
\end{proof}
Theorem~\ref{thm:soft-sequence-heap} follows from Lemma~\ref{lem:total-corruptions} and Lemma~\ref{lem:time-bounds}.

\paragraph*{Remarks}

Essential to our construction is that we reduce the length of the merged sequences to avoid spending $\Theta(N\lg N)$ time on merging sequences during $N$ insertions.
The presented solution is based on binary merging and applies {\REDUCE} at every second rank -- inspired by the ``double even fill'' car-pooling used by Kaplan \latin{et al.}~\cite{KaplanTZ13}.
Alternatively, one could increase the merging degree to three (or more) and apply reductions at all ranks~$\geq r_0$.

Comparing our construction to previous constructions, our construction maintains a collection of sorted sequences whereas all previous soft heaps maintain heap ordered binomial trees or binary trees. Similar to our and all previous constructions is the application of car-pooling to achieve the improved performance over (non-soft) heaps and the usage of corruption-sets. Whereas our solution allows a sub-linear number of items not to be stored in corruption-sets (Lemma~\ref{lem:total-size-L}), previous solutions require a larger number of elements not to be stored in corruption-sets, i.e.\ our solution maintains order among a smaller subset of items not in corruption-sets. To be able to report when elements should be considered corrupted, i.e.\ to satisfy the the soft heap interface required by Kaplan~\latin{et al.}~\cite{KaplanKZZ19}, we apply witness-sets.

Note that the witness-sets can be removed completely from the construction if deletions are not required to return the set of items getting corrupted by a deletion --- but witness-sets are still crucial for the analysis to bound the number of corruptions in a soft sequence heap. Interestingly, this implies a structure where only $o(N)$ items are not in corruption-sets, but still guarantees that only $\epsilon N$ keys need to be considered corrupted.

\section{Heap ordered ternary tree based soft-heaps}
\label{sec:ternary}

Kaplan \latin{et al.}~\cite{KaplanTZ13} describe a soft heap implementation based on a forest of perfectly balanced binary trees, and Kaplan \latin{et al.}~\cite{KaplanKZZ19}  describe how to modify the structure to support the interface described in Section~\ref{sec:soft-heap-interface}. In particular they apply lazy insertions, to circumvent that the original structure might introduce corruptions during insertions. 
Their structure also adopts the notion ``double even fill''. 
In this section we discuss a variant of their structure that avoids both these concepts. The performance remains unchanged, i.e.\ all operations are amortized constant (and independent of $\epsilon$), except {\EXTRACTMIN} and {\DELETE} which require amortized time $O(\lg\frac{1}{\epsilon})$. In the following we skip addressing {\DELETE} (which can be handled by lazy deletions) and {\MELD} (which proceeds very similar as for soft sequence heaps).

The basic structures are perfectly balanced heap-ordered trees. We describe the construction generalized by a degree parameter~$d\geq 3$, although for our result we only need $d=3$.  [This deviates from \cite{KaplanTZ13} that uses $d=2$.] A rank~$r$ tree is a perfectly balanced tree with $d^r$ leaves, where all leaves have depth~$r$ and all internal nodes have $d$ children. Each leaf corresponds to a unique insertion. A tree is kept heap ordered by recursively pulling items up in the tree, leaving subtrees empty (i.e.\ nodes without items), such that the root stores the item with minimum value in the tree.

A simple (non-soft) ``forest heap'' consists of a list $\mathcal{L}$ of trees in non-decreasing rank order, with at most $d-1$ trees of each rank. The insertion of an item~$e$ creates a single node rank zero tree at the front of $\mathcal{L}$, storing~$e$. While the first $d$ trees of $\mathcal{L}$ have equal rank~$r$, we \emph{link} these trees to create a rank~$r+1$ tree: create a new rank~$r+1$ node and make the $d$ rank~$r$ roots the children of this node, and recursively fill the node with an item by moving an item with minimum key from a child one level up, recursively filling the child until no item can be moved up. Filling the new rank $r+1$ root takes time $O(d\cdot(r+1))$. 
During $N$ {\INSERT} operations at most $\floor{\frac{N}{d^r}}$ roots of rank $r$ are created. Since the maximal rank of a tree is $\floor{\lg_d N}$, the total time to link roots during insertions is at most
$O(\sum_{r=1}^{\floor{\lg_d N}} \floor{\frac{N}{d^r}}\cdot d\cdot r)=O(N)$, i.e.\
insertions take amortized constant time.
An {\EXTRACTMIN} operation identifies a root with an item with minimum key, removes this item, and recursively refills a rank~$r$ root in time $O(d\cdot r)$. Since there at most $(d-1)\cdot\floor{\lg_d N}$ roots, an {\EXTRACTMIN} operation takes time $O(d\cdot\lg_d N)$. 

To improve performance, car-pooling is adapted, to avoid moving each item all the way from a leaf to the root. With each item $e$ at a node we store a corruption-set $\C{e}$ of corrupted items $e'$ with $\KEY{e'}\leq \KEY{e}$. Similarly to other soft heap implementations, we maintain suffix-min references for the roots in $\mathcal{L}$.
The implementation of {\INSERT} proceeds as described above for the non-soft case using repeated linking of $d$~trees of equal rank in amortized $O(1)$ time, except that we also need to update in constant time the suffix-min reference for the resulting first tree in $\mathcal{L}$.
[This deviates from the solution in \cite{KaplanTZ13} that uses ``double even fill'' during insertions, which can introduce corruptions, and require the insertions to be buffered and be performed lazily.]

For the implementation of {\EXTRACTMIN} we use a rank threshold
$r_0=\max(2,\ceil{\lg \frac{1}{\epsilon}})$.
We find the root $v$ storing an item $e$ with minimum non-corrupted key in constant time, using the suffix-min reference of the first tree in $\mathcal{L}$. If $\C{e}\neq\emptyset$, we return a corrupted item from $\C{e}$ with current key equal to $\KEY{e}$, without generating any corruptions. Otherwise, $e$ will be returned with its real key. Before doing so, we need to refill $v$ with a new item and update suffix-min references for all roots in $\mathcal{L}$ from right-to-left starting at $v$.
The refilling of an empty node is done with a twist, possibly creating corruptions. 
Whenever a node of rank~$r$ (i.e. the height of the subtree rooted at the node is $r$) is to be refilled, we refill it recursively as in the non-soft case if $r\leq r_0$. If $r>r_0$ recursively move \emph{two} items $e_1$ and $e_2$ with smallest keys from the subtree to the node, $\KEY{e_1}\leq \KEY{e_2}$, and make $e_1$ corrupted (and to be returned as corrupted by {\EXTRACTMIN}) by appending $e_1$ and $\C{e_1}$ to $\C{e_2}$, and leave $e_2$ as the new item at the node. 
[This deviates from the solution in \cite{KaplanTZ13}, that only recursively double fills  for even ranks, which limits the number of courruptions introduced when binary linking is applied.]
The amortized analysis of {\EXTRACTMIN} is given below.

To bound the total number of corruptions, let $c_r$ denote an upper bound on the size of a corruption-set $\C{e}$ for an item $e$ stored at a node of rank~$r$. 
Since corruptions only are introduced at nodes with rank $r>r_0$, we have $c_r=0$ for $r\leq r_0$. The size of $\C{e_2}$ only increases at a node at rank $r$, when $e_2$ and another item $e_1$ are moved up from rank $r-1$ nodes, and $e_1$ and $\C{e_1}$ are appended  to $\C{e_2}$. It follows $c_r =2c_{r-1}+1$ for $r>r_0$, implying $c_r=\sum_{i=0}^{r-r_0+1} 2^i=2^{r-r_0} -1$ for $r>r_0$.
By summing over all possible nodes, the total number of corruptions in a structure is bounded by 
$\sum_{r=0}^{\floor{\lg_d N}} \floor{\frac{N}{d^r}}\cdot c_r
\leq \sum_{r=r_0+1}^{\floor{\lg_d N}} \frac{N}{d^r}\cdot 2^{r-r_0}
< \frac{N}{2^{r_0}}\cdot\left(\frac{2}{d}\right)^{r_0+1}\cdot\frac{d}{d-2}
\leq \frac{N}{2^{r_0}}
\leq \epsilon N$, 
since $r_0=\max(2,\ceil{\lg\frac{1}{\epsilon}})$.

Next we bound the time spend on refilling nodes during $\delta$ {\EXTRACTMIN} operations.
After $N$ insertions there are at most $N/2^{r_0}$ nodes with rank~$>r_0$, 
and $N/2^{r_0}$ items in the corrupted-sets.
Together with the $\delta$ deleted items, a total of at most $\delta+2N/2^{r_0}$
items need to have moved up to nodes with rank $>r_0$.
The recursive pull of an item from rank~$r_0$ to $r_0+1$ takes worst-case $O(d\cdot r_0)$ time,
i.e.\ over all {\EXTRACTMIN} operations we spend $O(d\cdot r_0\cdot(\delta  + N/2^{r_0}))=O(d\cdot N + d \cdot r_0 \cdot \delta)$ time on recursively filling nodes at ranks~$\leq r_0$.
For moving items up to nodes at rank~$>r_0$, we observe that whenever we move two items one level up, one of the items get corrupted --- except for the last possible item being moved into a node before the subtree becomes empty. 
Since an item can at most get corrupted once, and there are at most $N/2^{r_0}$ nodes with rank $>r_0$, at most $O(N)$ times an item is moved one level up. 
We conclude that a sequence with $N$ {\INSERT} and $\delta$ {\EXTRACTMIN} operations in total spend 
$O(d\cdot N+d \cdot r_0 \cdot \delta)$ time on recursively pulling items up during the {\EXTRACTMIN} operations.

Finally,we consider the time to update the suffix-min pointers during {\EXTRACTMIN}. If the item at a root of rank $r$ changes, at most $(d-1)\cdot(r+1)$ suffix-min references need to be updated in time $O(d\cdot r)$. We charge the cost for updating the suffix-min references at the roots of rank $\leq r_0$ directly to the {\EXTRACTMIN} operation, i.e.\ $O(d\cdot r_0)$. 
For updating roots of ranks $r_0 + 1$ to $r$ we just consider a very rough bound on the total number of different items that can become the root of rank~$r$ trees. 
At most $\floor{N/d^r}$ trees are ever created of rank $r$, each such tree contains at most $d^r$ items, of which only a fraction $\Theta(\frac{1}{2^{r-r_0}})$ can reach the root, due to the pruning of every second item reaching nodes of rank $r_0+1,\ldots,r$. In total
$$O\left(\floor{\frac{N}{d^r}} \cdot d^r \cdot \frac{1}{2^{r-r_0}}\right)$$
different items can become the root of rank $r$ trees. Charging updating $O(d\cdot(r-r_0))$ suffix-min references to each of these items, and summing over all ranks gives
$$O\left(\sum_{r=r_0+1}^{\floor{\lg_d N}} N \cdot \frac{1}{2^{r-r_0}} \cdot d\cdot (r-r_0)\right)
=O(N\cdot d)$$
as an upper bound of updating sufffix-min references.

We conclude that a sequence with $N$ {\INSERT} and $\delta$ {\EXTRACTMIN} operations requires total time $O(d\cdot N+d \cdot r_0 \cdot \delta) = O(N+\delta\lg\frac{1}{\epsilon})$  for $d=3$ and $r_0=\max(2,\ceil{\frac{1}{\epsilon}})$, i.e.\ {\INSERT} takes amortized constant time and {\EXTRACTMIN} amortized $O(\lg\frac{1}{\epsilon})$ time.


\bibliographystyle{plainurl}
\bibliography{softheap}


\newpage
\begin{appendix}

\section{Notation}

\begin{center}
  \begin{tabular}{cl}
    Notation & Description \\
    \hline
    $N$ & total number of insertions \\
    $n$ & current number of items in soft heap \\
    $\epsilon$ & error parameter \\
    $\SH$ & soft heap \\
    $\mathcal{L}$ & set of sequences \\\
    $L_1, L_2, \ldots, L_\ell$ & sorted sequences \\
    $r$ & rank \\
    $r_0$ & rank threshold, $r_0=\ceil{\lg\frac{1}{\epsilon}}$ \\
    $e=(k,v)$ & item = (key, value) pair \\
    $\W{e}$ & Witness-set of non-corrupted item $e$ \\
    $\C{e}$ & Corruption-set of item $e$ \\
    $c_r$ & $|\C{e}|\leq c_r$ for $e$ in rank $r$ sequence \\
    $w_r$ & $|\W{e}|\leq w_r$ for $e$ in rank $r$ sequence \\
    $I(e)$ & corruption interval of item  $e$ \\
    $D(L,x)$ & $D(L,x)=\{e \mid \exists e' \in L: e\in \C{e'} \wedge x\in I(e) \}$ \\
    $d_r$ & $|D(L,x)|\leq d_r$ for rank $r$ sequence $L$ \\
    $s_r$ & $|L|\leq s_r$ for rank $r$ sequence $L$ \\
    $\alpha(m,n)$ & inverse of Ackermann's function \\
    $\delta$ & number of deletions \\
    $d$ & degree parameter \\
    \hline
  \end{tabular}
\end{center}

\end{appendix}

\end{document}